\newcommand{\GZ}{{\mathbb{Z}}}
\newcommand{\GL}{{\mathfrak{L}}}
\newcommand{\GB}{{\mathfrak{B}}}
\newcommand{\GR}{{\mathbb{R}}}
\newtheorem{theorem}{Theorem}[section]
\newtheorem{corollary}{Corollary}[section]
\newtheorem{proposition}{Proposition}[section]
\theoremstyle{definition}
\newtheorem{definition}[theorem]{Definition}
\newtheorem{remark}{Remark}
\newtheorem{example}{Example}
\definecolor{darkviolet}{rgb}{0.58, 0.0, 0.83}
\date{\today}
\title{On Polynomial Modular Number Systems over $\mathbb{Z}/\lowercase{p}\mathbb{Z}$}
\author{Jean Claude Bajard \thanks{Sorbonne Universit\'e, CNRS, Inria, Institut de Math\'ematiques de Jussieu -- Paris Rive Gauche, France
({jean-claude.bajard@sorbonne-universite.fr})}
\and
 J\'er\'emy~Marrez \thanks{Sorbonne Universit\'e, CNRS, Laboratoire d'informatique de Paris 6, Paris, France
({jeremy.marrez@sorbonne-universite.fr})}
\and
Thomas~Plantard \thanks{University of Wollongong, Institute of Cybersecurity and Cryptology, Wollongong, Australia
({thomaspl@uow.edu.au})}
 \and
Pascal~V\'eron \thanks{Universit\'e  de Toulon, Institut de Math\'ematiques,  Toulon, France
({pascal.veron@univ-tln.fr})}
}
\begin{document}

	\maketitle

\begin{abstract}
Since their introduction in 2004,  Polynomial Modular Number Systems (PMNS) have become a very interesting tool for implementing cryptosystems relying on modular arithmetic in a secure and efficient way. However, while their implementation is simple, their parameterization is not trivial and relies on a suitable choice of the polynomial on which the PMNS  operates. The initial proposals were based on particular binomials and trinomials. But these polynomials do not always provide systems with interesting characteristics  such as small digits, fast reduction, etc.

 In this work, we study a larger family of polynomials that can be exploited to design a safe and efficient PMNS.  To do so, we first state  a complete existence theorem for PMNS which provides bounds on the size of the digits for a generic polynomial, significantly improving previous bounds.  
Then, we present classes of suitable polynomials which provide numerous PMNS for safe and efficient arithmetic. 
\end{abstract}

\section{Introduction}
\label{sec:introduction}

\subsection*{Context of the modular arithmetic} Modular arithmetic is at the core of modern cryptography~\cite{SP2018}. Modular operations (essentially multiplication and addition) appear in most of today's public key cryptography. Widely used cryptographic protocols such as RSA~\cite{RSA1978}, DSA~\cite{NIST2009} and their counterparts based on elliptic curves~\cite{Mil1985,Kob1987} are at the core of modern communication. The main cost of all these cryptosystems is due to modular arithmetic.
Their potential successors, currently competing in the post-quantum cryptography standardization contest organized by the U.S. National Institute of Standards and Technology NIST~\cite{AAACDKLMMPPRS20}, also rely heavily on modular arithmetic.
As an example, lattice based proposals such as Kiber~\cite{BDKLLSSSS18}, NTRU~\cite{HPS1998}, Saber~\cite{AKSV2018}, Falcon~\cite{PFHKLPRSWZ17}, or isogeny based key exchange (SIKE~\cite{SIKE2019}) rely all on fast modular arithmetic. Furthermore, pairing based cryptography offers revolutionary protocols~\cite{BF2001} which rely as well on modular arithmetic on large moduli.

\subsection*{Specific modular arithmetic} As improving modular arithmetic has such a wide impact on the efficiency of modern cryptographic protocols,  special classes of moduli have been investigated. These special moduli are generally inspired by Mersenne numbers (integers of the form $2^m - 1$) to perform a modular reduction as fast as possible, namely Pseudo Mersenne~\cite{Cra1992}, Generalized Mersenne~\cite{Sol1999}. Other ones have been created to be particularly efficient when used with some specific algorithm. As an example, Montgomery-friendly primes~\cite{Ham2012,BCHL2013,BD2021} have been created to be operated with Montgomery reduction~\cite{Mon1985}. 
However, these classes are by definition limited and multiple cryptosystems require a free choice of  the moduli on which they operate.

\subsection*{The origin of PMNS}
To obtain efficient modular arithmetic for all moduli, and not only for a class of special moduli, the PMNS\cite{Pla2005} were proposed as an effective representation system.
They operate without carry propagation and offer both the advantages of fast polynomial arithmetic and easy parallelization for  arbitrary moduli $p$.
Specifically, a PMNS is a modular system, where any integer $a$ modulo $p$ (which is not necessarily a prime) is represented as a polynomial $A(X)$ of degree smaller than a fixed integer $n$. Modular multiplication and addition of two integers $a$ and $b$ in $\mathbb{Z}/p\mathbb{Z}$ are then computed using their representatives $A(X)$ and $B(X)$ in the PMNS. The coefficients of the polynomials are the digits and are bounded by an integer $\rho$, which is small relatively to $p$ $(\rho \simeq p^{1/ n})$. The construction of such systems is based on sparse polynomials whose roots $\gamma$ are used as  radices for this kind of positional representation, that is to say,  $A(\gamma) \equiv a \pmod{p}$. 
The interest of these sparse polynomials lies in the efficiency of the spawned modular arithmetic. 
The operations in PMNS are  done in two steps. First, the operations are carried out on polynomials modulo a sparse polynomial $E(X)$, called reduction polynomial, which is of degree $n$, and this reduction ensures that the degree of the result is smaller than $n$. 
In other words, to compute $a \odot b$  ($\odot$ representing an addition or a multiplication), one computes $C(X) = A(X) \odot B(X) \bmod E(X)$.
Then, a coefficient reduction is performed involving a lattice associated with the system \cite{HPS2011, PSZ2015,Gal2012}; this operation guarantees that the coefficients of the result $C(X)$ are bounded by $\rho$. 

A method for constructing a prime $p$ which has an efficient PMNS, has been published in 2004  \cite{BIP2015s}. The system is built from two sparse polynomials with good reduction properties (one is the reduction polynomial $E(X)$, the other one is used for the coefficient reduction), in order to derive the corresponding integer $p$ through the computation of a resultant, and also of one root $\gamma$. 
In order to be able to work with an arbitrary $p$, prime or not, another approach has been developed in  \cite{BIP2015a}  by constructing PMNS from an integer $p$, a number of digits $n$ and an integer polynomial $E(X)$ of the form $E(X) = X^n + aX + b$ satisfying some assumptions. Moreover, this result guarantees the existence of a PMNS  with a  bound on the digit size  $\rho$ allowing the representation of all numbers modulo $p$. Nevertheless, building such systems for a given $p$ is not trivial.

The structure of the reduction polynomial $E(X)$ gives the complexity of the polynomial reduction. Then, with   $p$ and a root $\gamma$  of $E(X)$ modulo $p$, we can define an associated lattice which allows to define the bound  $\rho$  on the coefficients of the representation and also provides the method of reduction of the coefficients. Therefore, it is interesting, for a given $p$, to find polynomials $E(X)$  giving efficient polynomial modular reductions and  roots $\gamma$ to define the associated lattice and the reduction of the coefficients.
\subsection*{PMNS in a cryptographic context}
The efficiency of this system of representation was the subject of an in-depth study in \cite{DDV2020} for binomials $E(X) = X^n -\lambda$. Such a representation system is called an AMNS (Adapted Modular Number System) \cite{BIP2015s}.
 It has been observed that, for primes $p$ whose size fits the standard sizes used in elliptic curve cryptography (ECC), the AMNS representation allows to compute modular multiplications in a much more efficient way than the classical libraries OpenSSL and GnuMP (even if using for the latter the low level arithmetic functions and the undocumented Montgomery multiplication function). Later, this study has been confirmed in \cite{MPHELL} which described a specific library for ECC, named MPHELL, and compared it with other dedicated cryptographic libraries. The results show that on a 64-bit architecture, the AMNS representation gives the best results inside MPHELL for ECDSA/EdDSA signatures (generation and verification). Moreover, it offers also competitive timings on an ARM v8 architecture or a STM32F4 board. In \cite{sike}, the authors extend the AMNS representation system to $\mathbb{F}_{p^k}$ and show how it can be used in order to improve the performances of SIKE \cite{SIKE2019}, one of the alternate KEM candidate of the NIST post-quantum standardization process \cite{nist}. A first hardware implementation of the AMNS is described in \cite{Asma2021}. To end, it is shown in \cite{DDEMMV2019,Negre_2021} that some ``random steps'' can be injected in AMNS multiplication in order to resist to a side channel analysis.
\color{black}
\subsection*{Motivation and main results} The major motivation and result of this paper is an effective construction of  efficient PMNS for any integer $p$. The efficiency is measured in particular by the minimality of the digit size $\rho$ 
which depends on a reduced basis of the associated lattice that we explicitly construct. In \cref{sec:theorem}, we give  bounds and properties  and used them in \cref{sec:classes,sec:number} to define what is a suitable polynomial for PMNS.
The main results can be summarised as follows~:
\begin{enumerate}
\item   \Cref{RhoBoundTheorem} lays down critical result on PMNS existence. It relates the digit size $\rho$ to the infinity norm of the transpose of a reduced basis (seen as a matrix) of the associated  lattice. 
The reduction criterion consists, in this context of  PMNS, in searching for a basis such that the infinity norm of its transpose is close to a minimal.

\item    In  \cref{PropInversible}, we first construct  a reduced basis for a sublattice built from a short vector of the initial associated lattice.   \Cref{Prop:irreducbound} specifies this point when $E(X)$ is an irreducible polynomial. In this case,  we give a bound for the digit size $\rho$ depending only on $p$ and $E(X)$. Then, \cref{CorIrreduc,CorIrreducb} provide concrete  construction methods for reduced lattice bases.

 \item Then we introduce effective constructions of efficient PMNS   introduced in    \cref{sec:classes,sec:number}. We provide multiple classes of polynomial $E(X)$ over which PMNS can be efficiently used, with studies on both their irreducibility and the size of the set of their roots in $\mathbb{Z} / p\mathbb{Z}$, two key parameters for their usability.

\end{enumerate}


\subsection*{Organization of the paper}
This paper is organized as follows: \cref{sec:lattice,sec:pmns} recall the necessary background respectively on lattice theory and PMNS. Then \cref{sec:theorem} presents   theorems, propositions and their corollaries, which provide criteria for constructing concrete efficient PMNS for any $p$.
In \cref{sec:classes}, we specify what is a suitable reduction polynomial $E(X)$, and propose  main classes of suitable irreducible  polynomials; they allow efficient reductions, and their roots can be clearly identified in a finite prime field $\mathbb{Z}/p \mathbb{Z}$.
    \Cref{sec:number} studies the number of roots in a finite prime field $\mathbb{Z}/p \mathbb{Z}$ of the  reduction polynomial $E(X)$.

\section{Lattice Basics} 
\label{sec:lattice}
Lattice theory, also known as  geometry of numbers, was introduced by H. Minkowski in 1896~\cite{Min1896}.

A comprehensive discussion on the basics of lattice theory is presented in~\cite{Cas1959,Lov1986,CS1988}. We present in this section only the different definitions and results useful for the comprehension of our paper.
                                             
\begin{definition}[Lattice]
A \textit{lattice} $\GL$ is a discrete subgroup of ${\GR}^n$,  that is, the set of all the integral combinations of $d \leqslant n$ linearly independent vectors over $\GR$:
\[ 
\GL = \GZ \,b_1 + \dots + \GZ\,b_{d}, \quad b_i \in \GR^n.
 \]
Here, $B=(b_1,...,b_d)$ is called a basis of $\GL$ and $d$, the \textit{dimension} of $\GL$. 
We note  $\GL(B)$   a lattice of basis $B$.
If $d=n$, the lattice is called \textit{full-rank}.
\end{definition}

The \textit{determinant} of $\GL$ defined by $\det{\GL}=\sqrt{\det{\left ( BB^T \right )}}$ is invariant for any basis $B$ of $\GL$.

Lattice theory problems are based on minimising the distance between vectors. The natural norm used in lattice theory is the euclidean norm.  The \textit{euclidean norm} of  a vector $v$ is computed by 
\[
 \|v\|_2= \sqrt{ \sum_{i=1}^nv_i^2}.
 \]
 Other $l_p-$norm, $\|v\|_p= \left ( \sum_{i=1}^n|v_i|^p \right )^{1/p},$  can also be used. If $p=\infty$, then the norm is called the max-norm $\|v\|_\infty= max_{i=1}^n|v_i|.$

One of the most studied lattice problems is the Shortest Vector Problem (SVP).

\begin{definition}[SVP]
  Given a lattice $\GL$,   solving the Shortest Vector Problem, amounts to finding a vector $u \in \GL$ such that $\forall v \in \GL\setminus\{0\}, 0<\|u\| \leqslant \|v\|$ for a given norm $\|.\|$.

  The norm $\|u\|$ of such vector $u$ is called the \textsl{first minimum} and is denoted as $\lambda_1$.
Moreover, $\lambda_i$ will represent the norm of the $i^{th}$ minimum (the minimum norm of $i$ linearly independent vectors). \\
If  the norm is not specified, one will assume $\lambda_i$ to be the $i^{th}$ minimum of $\GL$ for the Euclidean norm, $\lambda_{i,p}$ will represent the $i^{th}$ minimum of $\GL$ for the $l_p$ norm.
\end{definition}

In 1998, M. Ajtai~\cite{Ajtai98} proved that SVP is NP-hard under a randomized reduction. Therefore, the best algorithm to compute SVP in polynomial space uses exponential time. It was proposed by R. Kannan in 1983 and relies on strongly reducing  each vector of a basis by recursion. It is often referenced as HKZ for Hermite-Korkin-Zolotareff. Its current best time estimation is at $2^{O(d)}d^\frac{d}{2e}$~\cite{HS2007}. Furthermore, some polynomial solutions exist as well such as LLL~\cite{LLL1982} or BKZ~\cite{Sch1994}. 
However, these solutions return vectors whose norm is equal to  the size of the shortest vector   times  an exponential factor.

Nevertheless, certain bounds do exist on the first minimum and were given by Minkowski's initial work:
\begin{theorem}[Minkowski]
  Let $\GL$ a lattice of dimension $d$, then
 \[
 \lambda_{1,\infty} \leqslant (\det{\GL})^{\frac{1}{d}}.
 \]
\end{theorem}

This bound is tight in max-norm. However, it is still an open problem for the Euclidean Norm.
A second key problem of lattice theory is the Closest Vector Problem (CVP).

\begin{definition}[CVP]
  Given a lattice $\GL$ and a vector $w$, to solve CVP is to find a vector $u \in \GL$ such that $\forall v \in \GL, \|w-u\| \leqslant \|w-v\|$. The quantity $\|w-u\|$ is noted $dist(w,\GL)$.
\end{definition}

The problem CVP is NP-Hard as well~\cite{VEB1981}.
Finally, a key invariant has been studied to try to evaluate the orthogonality of a lattice, i.e., the \textit{Covering Radius}.

\begin{definition}[Covering Radius]
  Let $\GL$ be a full rank lattice. The covering radius of $\GL$, noted $\mu(\GL)$, is the supremum of distances between any vector of $\GR^d$ and $\GL$, i.e.,
  \[
  \mu(\GL)= \max_{v \in \GR^d} dist(v,\GL)\,.
  \]
\end{definition}

No polynomial algorithm exists to find the covering radius~\cite{GMR2004}. However, we know that for any $l_p$-norm~\cite{GMR2004}, we have 
\[
\mu_p(\GL) \geqslant \frac{\lambda_{d,p}}{2}.
\]
For the Euclidean norm, we know that $\mu_2(\GL) \leqslant \sqrt{d} \lambda_d(\GL)$~\cite{GMR2004}. By simple norm relation, we obtain that $\mu_\infty(\GL) \leqslant \sqrt{d} \lambda_{d}(\GL).$

\section{Polynomial Modular Number System} 
\label{sec:pmns}

In this section, we recall  basic definitions and results on PMNS.

          

\begin{definition}[Polynomial Modular Number System]
\label{def:pmns}
Let $p\geqslant 3$, $n\geqslant 2$, $\gamma\in[1,p-1]$ and $\rho \in[1,p-1]$ be  integers. Let $E(X)\in \mathbb{Z}[X]$ be a monic polynomial of degree $n$ that satisfies $E(\gamma) \equiv 0   \pmod{p}  $.
A {Polynomial Modular Number System} (PMNS) is a set $\GB\subset{\mathbb Z}[X]$ such that:
\begin{enumerate}
    \item $\forall A(X)\in\GB$, deg$(A(X)) < n$,
    \item $\displaystyle \forall A(X)=\sum_{i=0}^{n-1}a_iX^i\in\GB$, $ -\rho < a_i < \rho$ for all $i$,
    \item  $\forall a\in\{0,\dots p-1\}$,  $\exists A(X)\in\GB$ such that 
$A(\gamma)\equiv a\pmod p$.
\end{enumerate}
The polynomial $E(X)$ is called \emph{reduction polynomial} with respect to $p$.  
\end{definition}

A PMNS is thus a system of representation for elements in  $\mathbb{Z}/p\mathbb{Z}$ where \\
$\displaystyle a\in\mathbb{Z}/p\mathbb{Z} \mbox{ with } a \equiv \sum_{i=0}^{n-1}a_i\gamma^i  \pmod p = A(\gamma)\bmod p \mbox{ and } -\rho < a_i < \rho \mbox{ for all }.$

It looks a priori like the classic $\gamma$-ary positional system but since the  $\gamma^i \bmod{p}$  are not ordered, there is no obvious way to compare two representatives $A(X)$ and $B(X)$ without computing $A(\gamma)\bmod p$ and $B(\gamma)\bmod p$. This is clearly shown in \cref{ex:pmns}.

 Throughout this paper, we use the  notation  $\GB=(p,n,\gamma,\rho)_E$ to recall that the PMNS $\GB$ is determined by these five parameters. Also, with a polynomial  $A(X) = a_0+a_1 X+ \dots+a_{n-1}X^{n-1}$ we associate the vector $A=(a_0,a_1,\dots,a_{n-1})$. We will switch  between both notation when it is best suited for comprehension.

Operations in $\GB$ are first done modulo $E(X)$,  and then a coefficient reduction process is performed, by subtracting an appropriate polynomial having $\gamma$ as root modulo $p$, to guarantee that all the coefficients are bounded by $\rho$ in absolute value \cite{PSZ2015,DDV2020}.

\begin{example}
  \label{ex:pmns}  \Cref{ex:pmns1} shows how to represent elements of $\mathbb{Z}/31\mathbb{Z}$ as polynomials of degree lower or equal to  $3$ and  coefficients belonging to   $\{ -1, 0, 1\}$.
  

\begin{table}[!h]
  \centering {\scriptsize
\center
\begin{tabular}{|@{}c@{}|@{}c@{}|@{}c@{}|@{}c@{}|@{}c@{}|@{}c@{}|}
  \hline
  0 	 & 1	       & 2	     & 3 	  & 4 		 & 5 \\
  \hline
        (0, 0, 0, 0)
   &
        (1, 0, 0, 0)
   &
        (-1, 1, -1, 1)
   &
   \begin{tabular}[t]{c}       
        (-1, -1, -1, 1)\\
        (-1, 0, 0, -1)\\
        (-1, 0, 1, 1)\\
        (0, 1, -1, 1)
         \end{tabular} 
   &
   \begin{tabular}[t]{c}        (0, -1, -1, 1)\\
        (0, 0, 0, -1)\\
        (0, 0, 1, 1)\\
        (1, 1, -1, 1)  \end{tabular}
   &
   \begin{tabular}[t]{c}        (1, -1, -1, 1)\\
        (1, 0, 0, -1)\\
        (1, 0, 1, 1)  \end{tabular}
   \\
   \hline
    6 	 & 7	       & 8	     & 9 	  & 10 		 & 11 \\
  \hline

        (-1, 1, -1, 0)
   &
   \begin{tabular}[t]{c}        (-1, -1, -1, 0)\\
        (-1, 0, 1, 0)\\
        (0, 1, -1, 0)  \end{tabular}
   &
   \begin{tabular}[t]{c}        (0, -1, -1, 0)\\
        (0, 0, 1, 0)\\
        (1, 1, -1, 0)  \end{tabular}
   &
    \begin{tabular}[t]{c}       (1, -1, -1, 0)\\
        (1, 0, 1, 0)  \end{tabular}
   &
     \begin{tabular}[t]{c}      (-1, 1, -1, -1)\\
        (-1, 1, 0, 1)  \end{tabular}
   &
  \begin{tabular}[t]{c}         (-1, -1, -1, -1)\\
        (-1, -1, 0, 1)\\
        (-1, 0, 1, -1)\\
        (0, 1, -1, -1)\\
        (0, 1, 0, 1)  \end{tabular}
   \\
   \hline
    12 	 &13       & 14	     & 15 	  & 16	 &17 \\
  \hline
  \begin{tabular}[t]{c}       (0, -1, -1, -1)\\
        (0, -1, 0, 1)\\
        (0, 0, 1, -1)\\
        (1, 1, -1, -1)\\
        (1, 1, 0, 1)  \end{tabular}
   &
   \begin{tabular}[t]{c}        (1, -1, -1, -1)\\
        (1, -1, 0, 1)\\
        (1, 0, 1, -1)  \end{tabular}
   &
        (-1, 1, 0, 0)
   &
    \begin{tabular}[t]{c}       (-1, -1, 0, 0)\\
        (0, 1, 0, 0)  \end{tabular}
   &
  \begin{tabular}[t]{c}         (0, -1, 0, 0)\\
        (1, 1, 0, 0)  \end{tabular}
   &
        (1, -1, 0, 0)
   \\
   \hline
   18 	 & 19	       &20	     & 21 	  & 22	 & 23 \\
  \hline

    \begin{tabular}[t]{c}       (-1, 0, -1, 1)\\
        (-1, 1, 0, -1)\\
        (-1, 1, 1, 1)  \end{tabular}
   &
    \begin{tabular}[t]{c}       (-1, -1, 0, -1)\\
        (-1, -1, 1, 1)\\
        (0, 0, -1, 1)\\
        (0, 1, 0, -1)\\
        (0, 1, 1, 1)  \end{tabular}
   &
    \begin{tabular}[t]{c}       (0, -1, 0, -1)\\
        (0, -1, 1, 1)\\
        (1, 0, -1, 1)\\
        (1, 1, 0, -1)\\
        (1, 1, 1, 1)  \end{tabular}
   &
    \begin{tabular}[t]{c}       (1, -1, 0, -1)\\
        (1, -1, 1, 1)  \end{tabular}
   &
     \begin{tabular}[t]{c}      (-1, 0, -1, 0)\\
        (-1, 1, 1, 0)  \end{tabular}
   &
    \begin{tabular}[t]{c}       (-1, -1, 1, 0)\\
        (0, 0, -1, 0)\\
        (0, 1, 1, 0)  \end{tabular}
   \\
    \hline
   24 	 & 25      & 26	     & 27 	  & 28 		 &29 \\
  \hline
   \begin{tabular}[t]{c}        (0, -1, 1, 0)\\
        (1, 0, -1, 0)\\
        (1, 1, 1, 0)  \end{tabular}
   &
        (1, -1, 1, 0)
   &
  \begin{tabular}[t]{c}         (-1, 0, -1, -1)\\
        (-1, 0, 0, 1)\\
        (-1, 1, 1, -1)  \end{tabular}
   &
  \begin{tabular}[t]{c}         (-1, -1, 1, -1)\\
        (0, 0, -1, -1)\\
        (0, 0, 0, 1)\\
        (0, 1, 1, -1)  \end{tabular}
   &
    \begin{tabular}[t]{c}       (0, -1, 1, -1)\\
        (1, 0, -1, -1)\\
        (1, 0, 0, 1)\\
        (1, 1, 1, -1)  \end{tabular}
   &
        (1, -1, 1, -1)
        \\
   \hline
         30 	 &        &  	     &   	  &   		 &   \\
  \hline
        (-1, 0, 0, 0) & & & &&\\
 \hline
\end{tabular}
}
  \caption{Elements of $\mathbb{Z}/31\mathbb{Z}$ are represented  as polynomials in $\gamma$, noted as vectors with lowest degree first. The reduction polynomial is $E(X)=X^4-2$ and 
 ${\gamma = 15}$ is a root of $E(X)$.  The digit set is $\{ -1, 0, 1\}$ (i.e., $\rho = 2$ ) }\label{ex:pmns1}
\end{table}
\end{example}

 We note that some values have more than one representation. This redundancy is not studied here, but it is useful in some applications \cite{DDEMMV2019}. Since all the elements of  $\mathbb{Z}/p\mathbb{Z}$ are represented, the value of $\rho$ satisfies $ \sqrt[n]{p}  \leq 2\rho - 1$, and redundancy starts when  $ \frac{\sqrt[n]{p}+1}{2} < \rho$.

\begin{remark}
  In~\cite{NP2008}, the authors proved that for every quadruple $(p,n,\gamma,\rho)$, there always exists a polynomial $E(X) \in \mathbb{Z}[X]$ satisfying $E(\gamma) \equiv 0 \mod p$, $\deg E(X) = n$ and $E(X)=X^n-c$ with $|c|\leqslant 2^{\frac{n}{2}}$.
However, one cannot hope to obtain fast primitives for modular arithmetic using a polynomial $E(X)$ with such a coefficient $c$ exponential in $n$. 
 Indeed, it is important to understand that modular operations are replaced in a PMNS by polynomial operations modulo $E(X)$, so that the degree of the result be still less than or equal to $n$. 
 The small size of the coefficients and the low density of the reduction polynomial $E(X)$ play a key role in the efficiency of modular reductions and in maintaining concise arithmetic.
 
Moreover, from a cryptographic point of view in the context of Side Channel Resistance, it could be of interest to build a PMNS from a polynomial $E(X)$ which has numerous roots modulo $p$,
since  distinct roots yield distinct  associated PMNS. In other words, from one execution to another one, for a fixed polynomial $E(X)$, a same secret value $k$ could be represented by a polynomial $K(X)$ which depends on the root used to build the PMNS.
  
Consequently, once the parameters $p$ and $n$ are given, or in other words, once it has been held that the integers modulo $p$ will be encoded on $n$ symbols,  the key question that arises is then which polynomials $E(X)$

 \begin{enumerate}
  \item allow one to find a parameter $\rho$ as small as possible,
   \item offer a good modular reduction,
  \item have a large number of roots $\gamma$ in $\mathbb{Z}/p\mathbb{Z}$.
 \end{enumerate} 
 
\noindent Next sections of this paper are devoted to these questions.  
\end{remark}

\section{Construction and specifications of PMNS}
\label{sec:theorem}

In this section, we give conditions to ensure the existence of a PMNS $\GB=(p,n,\gamma,\rho)_E$ for a generic $E(X)$.

\begin{theorem}\label{theorem_PMNS_generalise}
  Let $p \geqslant 2$ and $n  \geqslant 2$ be two  integers, $E(X)$ be a monic polynomial of degree $n$ in $\mathbb{Z}[X]$ and $\gamma$ an integer which is a root of $E(X)$ in $\mathbb{Z}/p\mathbb{Z}$.\\
  Let $\GL$ be the $n$-dimensional lattice  generated by the polynomials in $\mathbb{Z}[X]$ of degree at most $n-1$ for which $\gamma$ is a root modulo $p$. This lattice $\GL$  is generated by the following $n\times n$ matrix $\mathbf{A}$ (with respect to the canonical monomial basis, with polynomials represented in lines)
  \begin{equation}
\label{matrixA}
\mathbf{A} = \begin{pmatrix}
p &0&\dots&\dots&0&0\\
 -\gamma &1 & \dots & \dots & 0 &0\\
\vdots &\ddots & \ddots &   &   & \vdots \\
0 & \dots & -\gamma & 1 & \dots &0 \\
\vdots &  &   & \ddots & \ddots & \vdots  \\
0 & 0 & \dots & \dots & - \gamma & 1
 \end{pmatrix}.
\end{equation} 
 Then, \\
\indent for any $\rho>\mu_\infty(\GL)$ (the covering radius for the max-norm), \\
\indent the system $\GB = (p,n,\gamma,\rho)_E$ is a Polynomial Modular Number System.
 \end{theorem}
 
\begin{proof}
  Let $a \in [0,p-1]$ and let $T_a(X)=a$. 
We know that for any vector $T \in \mathbb{R}^n$ there exists $V \in \GL$ such that  $ \| T - V  \|_\infty \leqslant \mu_\infty(\GL)$. 
Hence, there exists $V_a \in \GL$ such that $\|T_a - V_a \|_\infty \leqslant \mu_\infty(\GL) < \rho$, and $(T_a - V_a)(\gamma) \equiv T_a(\gamma)  - V_a(\gamma)  \equiv a \bmod{p}$ (since $V_a\in\GL$). In consequence, for any $a\in[0,p-1]$, $T_a - V_a$ is a polynomial which fulfills the condition of  \cref{def:pmns}.
We conclude that $\GB=(p,n,\gamma,\rho)_E$ is a PMNS.

\end{proof}

Currently, there is no efficient algorithm to compute the covering radius of a lattice. In the next two sections, we describe how to obtain an effective calculation of the bound on $\rho$. 

\subsection{Relation between the lattice's basis and the PMNS}

\begin{theorem} \label{RhoBoundTheorem}
  Let $p \geqslant 2$ and $n  \geqslant 2$ be two integers, $E(X)$ be a monic polynomial of degree $n$ in $\mathbb{Z}[X]$ and $\gamma$ be a root of $E(X)$ in $\mathbb{Z}/p\mathbb{Z}$.\\
  Let $\GL$ be the lattice  of polynomials in $\mathbb{Z}[X]$ of degree at most $n-1$, for which $\gamma$ is a root modulo $p$ 
 , $B$ a basis of $\GL$ and $\mathbf{B}$ the matrix associated to this basis (each row is an element of $B$).\\
Then, 
\[
 \mbox{for any }  \rho > \frac{1}{2}   \left\|  \mathbf{B^T} \right\|_\infty=  \max_{j}\left\{ \sum_{i=0}^{n-1} |b_{i,j} | \right\},
 \]
 \[ 
 \GB = (p,n,\gamma,\rho)_E \mbox{ is a Polynomial Modular Number System.}
 \]

\end{theorem}

\begin{proof}
Following the proof of \cref{theorem_PMNS_generalise}, we only have to show that for any polynomial $S(X)$, one can find a polynomial $T(X)\in\GL$ such that $\|S-T\|_{\infty}\leqslant\frac{1}{2}\|\mathbf{B}^T\|_\infty$.
Let $S \in  \mathbb{R}^n$. We define:
\begin{itemize}
\item $\lfloor  S \rceil $ as the vector whose coordinates are integers equal to the rounding to nearest integer of those  of  $S$;
\item $ \mbox{frac} (S)$ as the vector $ (S) = S - \lfloor  S \rceil $; notice that $\|  \mbox{frac} (S)\|_\infty \leqslant \frac{1}{2}\,.$
\end{itemize}

Let $S \in \mathbb{R}^n$.   We search a close vector $T \in \GL$ using a Baba\"i round-off approach \cite{Bab1986}.
We have,
$T =   \mathbf{B}^T \cdot \lfloor  ( \mathbf{B}^T)^{-1} \cdot S\rceil  $, thus

\[ 
S =    \mathbf{B}^T \cdot    (\mathbf{B}^T)^{-1} \cdot S =  T +   \mathbf{B}^T \cdot   \mbox{frac} \left(   (\mathbf{B}^T)^{-1} \cdot S \right)  \mbox{  with  } \left\|    \mbox{frac} \left(    (\mathbf{B}^T)^{-1} \cdot S  \right)\right\|_\infty \leqslant \frac{1}{2}.
\]

Then 

\[
 \left\|  S - T \right\|_\infty=  \left\|    \mathbf{B}^T\cdot   \mbox{frac} \left(   (\mathbf{B}^T)^{-1}\cdot S \right)\right\|_\infty \leqslant \frac{1}{2}   \left\|  \mathbf{B}^T \right\|_\infty.
 \]
\end{proof}

  In order to minimize $\rho$, a natural strategy is to choose a basis $B$ so that
$\|\mathbf{B^T}\|_\infty$ is small. 
  Such a basis can be computed from $A$ (\cref{theorem_PMNS_generalise}, eq.~\cref{matrixA}) using algorithms like LLL, BKZ or HKZ.

The next strategies can be applied when the polynomial $E(X)$ is irreducible.

\subsection{The case of irreducible reduction polynomials} 
Notice that \cref{RhoBoundTheorem} states that for any vector $S\in\mathbb{R}^n$, one can compute a vector $T$ in a lattice $\GL$ such that $\|S-T\|_\infty$ be smaller than $\frac{1}{2}\|\mathbf{B}^T\|_\infty$, where $B$ is a basis of $\GL$ and $\mathbf{B}$ its matrix form. The result holds for any lattice $\GL$ and any basis $B$ of this lattice. As a consequence, it can be applied to any basis $B'$ of a sublattice $\GL'$ of the lattice $\GL$ linked to the PMNS. The strategies described in this section are based on this remark.

Let $E(X)= X^n+a_{n-1}X^{n-1}+\dots+a_1X+a_0$, and let $\mathbf{C}$ be the companion matrix of $E(X)$:

 \begin{equation} \label{Eq:companion}
 \mathbf{C} =
  \begin{pmatrix}
   0 & 1 & 0 & \dots & 0 & 0 \\
  0 & 0 & 1 & \dots & 0 & 0 \\
  \vdots & \vdots & \vdots & \ddots & \vdots & \vdots \\
 0 & 0 & 0& \dots & 1 & 0 \\
  0 & 0 & 0 & \dots & 0 & 1\\
  -a_0& -a_1 &-a_2 &\dots & -a_{n-2} & -a_{n-1}
    \end{pmatrix}.
  \end{equation}
Let $V = (v_0,\dots,v_{n-1})$ be the vector representing the coefficients of the polynomial $V(X) = \sum_{i=0}^{n-1} v_i X^i$, then $V.\mathbf{C}$ is the vector whose coordinates are the coefficients of the polynomial $X.V(X) \bmod E(X)$.

\begin{proposition}\label{PropInversible}
Let $V$ be a non-zero vector of $\GL$, the  lattice  of rank $n$ defined by $\mathbf{A}$  (\cref{theorem_PMNS_generalise}, eq. (\cref{matrixA})).
Let  $B_i = V\cdot \mathbf{C}^i$  be the row vector whose coordinates are the coefficients of the polynomial $B_i(X) = X^i\cdot V(X) \bmod E(X)$.
Let $\mathbf{B}$ be the $n\times n$ matrix  whose $i^{th}$ row is the vector  $B_i$.
\\

\noindent If $V(X)$ is inversible  modulo $E(X)$   then:
\begin{itemize}
\item the matrix $\mathbf{B}$ defines a sublattice  $\GL' \subseteq \GL$ of rank $n$ (i.e., $B=(B_0,\dots,B_{n-1})$ is a basis of  $\GL')$,
\item  and $V \in \GL'$.
\end{itemize}
\end{proposition}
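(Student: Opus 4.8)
The plan is to treat the three assertions separately, disposing of the containment $\mathfrak{L'}\subseteq\mathfrak{L}$ and the membership $V\in\mathfrak{L'}$ first, and then concentrating on the rank, which is the only point where the invertibility hypothesis is really used. First I would characterise $\mathfrak{L}$ intrinsically: the map $\mathbb{Z}^n\to\mathbb{Z}/p\mathbb{Z}$ sending a vector $W$ to $W(\gamma)\bmod p$ is a surjective group homomorphism (it sends the first basis vector to $1$), so its kernel is a sublattice of $\mathbb{Z}^n$ of index $p$; since the rows of $\mathbf{A}$ all lie in this kernel and span a sublattice of determinant $\det\mathbf{A}=p$, the two coincide, and hence $\mathfrak{L}=\{W\in\mathbb{Z}^n : W(\gamma)\equiv 0 \pmod p\}$. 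With this description the containment reduces to checking $B_i(\gamma)\equiv 0\pmod p$: by definition $B_i(X)=X^iV(X)-Q_i(X)E(X)$ for some $Q_i\in\mathbb{Z}[X]$, so evaluating at $\gamma$ and using $V(\gamma)\equiv 0$ and $E(\gamma)\equiv 0\pmod p$ gives $B_i(\gamma)\equiv 0\pmod p$, whence $B_i\in\mathfrak{L}$ and $\mathfrak{L'}\subseteq\mathfrak{L}$. The membership $V\in\mathfrak{L'}$ is then immediate, since $B_0=V\cdot\mathbf{C}^0=V$ is itself one of the generating rows.

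It remains to prove that $B_0,\dots,B_{n-1}$ are $\mathbb{Z}$-linearly independent, equivalently that $\det\mathbf{B}\neq 0$, and this is the heart of the argument. The observation I would exploit is that $\mathbf{B}=V(\mathbf{C})$, the polynomial $V$ evaluated at the companion matrix. Indeed, writing $e_i$ for the row vector of the monomial $X^i$, the shift relation $e_0\mathbf{C}^i=e_i$ for $0\le i\le n-1$ gives $V=e_0V(\mathbf{C})$; since $V(\mathbf{C})$ commutes with every power of $\mathbf{C}$, one gets $B_i=V\mathbf{C}^i=e_0\mathbf{C}^iV(\mathbf{C})=e_iV(\mathbf{C})$, so that the $i$-th row of $\mathbf{B}$ is the $i$-th row of $V(\mathbf{C})$. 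Equivalently, $\mathbf{B}$ is the matrix, in the monomial basis, of the multiplication-by-$V$ endomorphism of $\mathbb{Z}[X]/(E(X))\cong\mathbb{Z}[\mathbf{C}]$. Then I would pass to the roots: the eigenvalues of $\mathbf{C}$ are the roots $\alpha_1,\dots,\alpha_n$ of $E$ in $\overline{\mathbb{Q}}$ (with multiplicity), so $\det\mathbf{B}=\det V(\mathbf{C})=\prod_{k=1}^{n} V(\alpha_k)=\pm\,\mathrm{Res}(E,V)$. Reading invertibility of $V$ modulo $E$ inside $\mathbb{Q}[X]/(E(X))$, it is equivalent to $\pgcd(E,V)=1$, i.e. to $E$ and $V$ having no common root, i.e. to $V(\alpha_k)\neq 0$ for every $k$, which is exactly $\det\mathbf{B}\neq 0$. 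Hence the rows of $\mathbf{B}$ are linearly independent and form a base of the rank-$n$ lattice $\mathfrak{L'}$ they generate, which proves the proposition.

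The main obstacle I anticipate is the identification $\mathbf{B}=V(\mathbf{C})$ together with the determinant formula $\det V(\mathbf{C})=\prod_k V(\alpha_k)$: one must verify that this eigenvalue computation survives the cases where $E$ is reducible or has repeated roots, which it does by triangularising $\mathbf{C}$ over $\overline{\mathbb{Q}}$, and one must pin down the meaning of \emph{invertible modulo $E(X)$} as invertibility over $\mathbb{Q}[X]/(E(X))$, so that the hypothesis matches the coprimality condition that is precisely equivalent to $\det\mathbf{B}\neq 0$ (note that full unimodularity of $\mathbf{B}$ is neither claimed nor needed, only nonvanishing of the determinant). Everything else is routine bookkeeping.
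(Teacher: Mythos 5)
Your proof is correct, but the key step is handled by a genuinely different mechanism than in the paper. For the linear independence of $B_0,\dots,B_{n-1}$, the paper argues directly in the quotient ring: if $\sum_i t_i B_i=0$ then $T(X)V(X)\equiv 0 \bmod E(X)$ with $T(X)=\sum_i t_iX^i$, and multiplying by $V^{-1}(X)$ gives $T(X)\equiv 0 \bmod E(X)$, hence $T=0$ because $\deg T<n$. You instead identify $\mathbf{B}$ with $V(\mathbf{C})$, the multiplication-by-$V$ matrix on $\mathbb{Z}[X]/(E(X))$, and compute $\det\mathbf{B}=\prod_k V(\alpha_k)=\pm\operatorname{Res}(E,V)$, which is nonzero exactly when $V$ is invertible modulo $E$ over $\mathbb{Q}$. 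The paper's cancellation argument is shorter and uses no machinery beyond the hypothesis itself; your determinant route costs the triangularisation of $\mathbf{C}$ over $\overline{\mathbb{Q}}$ but buys an explicit value for $\det\mathbf{B}$, hence the index $[\mathfrak{L}:\mathfrak{L}']=|\operatorname{Res}(E,V)|/p$, which is genuinely useful information in this lattice context. You are also more thorough on the remaining points: the paper merely asserts $\mathfrak{L}'\subseteq\mathfrak{L}$ and $V\in\mathfrak{L}'$ after establishing independence, whereas you verify $B_i(\gamma)\equiv 0\pmod p$ from $B_i(X)=X^iV(X)-Q_i(X)E(X)$ and note $B_0=V$; this fills a small gap the paper leaves implicit. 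Both readings of ``inversible modulo $E(X)$'' (over $\mathbb{Q}[X]/(E)$, i.e.\ $\gcd(E,V)=1$) are consistent with how the proposition is used in Corollary \ref{CorIrreduc}, so your interpretation is the right one.
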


\begin{proof}

The  $B_i$ are linearly independent. Indeed, let us suppose that there exists a non-zero vector $(t_0,t_1,\dots,t_{n-1}) \in \mathbb{Z}^n$ such that $\sum_{i=0}^{n-1} t_i B_i = 0 $. It means that $\sum_{i=0}^{n-1} t_i X^i V(X)= 0 \bmod{E(X)}$, or equivalently $T(X) V(X) = 0  \bmod{E(X)}$, with $T(X) = \sum_{i=0}^{n-1} t_i X^i $.
Then $T(X) V(X) V^{-1}(X)\bmod{E(X)} = T(X)=0 $, since $V(X)$ is inversible modulo $E(X)$ and  degree of $T(X)$ is at most $n-1$. Hence the rows of $\mathbf{B}$ are a basis of a sublattice  $\GL' \subseteq \GL$ of rank $n$, and  $V \in \GL'$ (which corresponds to the first row of B).
\end{proof}

\begin{corollary}\label{CorIrreduc}
  Let $V$ be a non-zero vector of $\GL$, the  lattice  of rank $n$ defined by $\mathbf{A}$  (\cref{theorem_PMNS_generalise}, eq. (\cref{matrixA})).

    If  $E(X)$ is irreducible, then 
\begin{itemize}
\item $V$   defines a sublattice  $\GL' \subseteq \GL$ of rank $n$, (i.e., $\mathrm{B}=(B_0,\dots,B_{n-1})$, defined in  \cref{PropInversible} is a basis of  $\mathfrak{L'} )$,
\item  moreover $V \in \GL'$.
\end{itemize}
\end{corollary}

\begin{proof}
If  $E(X)$ is irreducible, then $V(X)$ is inversible and  \cref{PropInversible} gives $B=(B_0,\dots,B_{n-1})$, a basis of  $\GL'$,   $\GL' \subseteq \GL$ of rank $n$, and  $V \in \GL'$.
\end{proof}


A possible strategy to lower the bound on $\rho$ is then to take a short vector $V \in \GL$, that is, a vector which satisfies the Minkowski bound $  \| V\|_\infty \leqslant \alpha p^{1/n}$ with $\alpha \in \left] 0,1 \right]$. From this vector $V$, we build the basis $B$ of the sublattice $\GL'$ to compute the lower bound on $\rho$. 

In this context, we can provide a bound such that if $\rho$ is greater than this, then we are guaranteed to have a PMNS.
Let us consider the $(2n-1)\times n$ matrix $M$ whose rows are the coefficients of $X^i\ {\rm mod}\ {E(X)}$ for $0\leqslant i\leqslant 2n-2$. For any polynomial $T(X)$ of degree at most $2n-2$, the coefficients of $T(X)$ mod $E(X)$ are computed as the vector-matrix product $TM$.

\begin{proposition}\label{Prop:irreducbound}
Let $E(X)$ be an irreducible polynomial, let  $M$  be the $(2n-1)\times n$ matrix whose rows are the coefficients of $X^i\ {\rm mod}\ {E(X)}$, for $0\leqslant i\leqslant 2n-2$, and $s=\|M^T\|_\infty$. 
\[ 
\mbox{If  } \rho > \frac{1}{2}  p^{1/n}(1+(n-1)s) \; ( \geqslant  \frac{1}{2} \|\mathbf{B}^T\|_\infty ), 
\]
\[ 
\mbox{then } \GB = (p,n,\gamma,\rho)_E  \mbox{ is a Polynomial Modular Number System.}
\]
\end{proposition}
\begin{proof}

Let $V$ be a short vector of the lattice $\GL$, hence $\|V\|_\infty\leqslant p^{1/n}$. From  \cref{PropInversible}, the matrix $\mathbf{B}$ is a basis of a sublattice $\GL'$. Each row $B_i$ contains the coefficients of $X^iV(X)\ {\rm mod}\ E(X)$. These coefficients are computed as the vector-matrix product $T^{(i)}M$ where $T^{(i)}(X)=X^iV(X)$. Hence $\|B_i\|_\infty\leqslant   s\|V\|_\infty$ for $i\geqslant 1$, and 
$\|B_0\|_\infty\leqslant  p^{1/n}$.
Therefore $\|\mathbf{B}^T\|_\infty\leqslant p^{1/n}(1+(n-1)s)$. We conclude using  \cref{RhoBoundTheorem}.
\end{proof}

A second strategy is  to use the companion matrix $\mathbf{C}$ of $E(X)$ for  computing a basis $B$ of  $\GL'$ .

\begin{corollary}\label{CorIrreducb}
Let  $\GL$ be the  lattice  of rank $n$ given by $\mathbf{A}$  (\cref{theorem_PMNS_generalise}, eq. \cref{matrixA}), let $\mathbf{C}$ be the companion matrix of $E(X)$,
and let $\GL_D$ be  the lattice of rank $n$ in $\mathbb{Z}^{n^2}$ defined by
$\mathbf{ D}= (\mathbf{ A}  |\mathbf{ A \cdot C}^1 |\cdots |\mathbf{A \cdot C}^{n-1})$.

For any $\overline{V} = (V_0,V_1,\cdots,V_{n-1}) \in \GL_D$ 
such that  $\overline{V}  \neq (0)^{n^2}$,\\
\noindent
if  $E(X)$ is irreducible  then:\begin{enumerate}
\item  $V_0 \in  \GL$,
\item  $(V_0,V_1,\cdots,V_{n-1})$ is a basis of $\GL' \subseteq \GL$.

\end{enumerate}
\end{corollary}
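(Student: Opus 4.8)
The plan is to recognize $\mathfrak{L}_D$ as the image of $\mathfrak{L}$ under the linear map that sends a vector $V \in \mathbb{Z}^n$ to the concatenation $(V \mid V\mathbf{C} \mid \cdots \mid V\mathbf{C}^{n-1}) \in \mathbb{Z}^{n^2}$, and then to reduce both claims to Proposition \ref{PropInversible} (equivalently Corollary \ref{CorIrreduc}). The whole argument is essentially an unpacking of the block structure of $\mathbf{D}$ followed by a direct appeal to the already-proven results.

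First I would write an arbitrary $\overline{V} \in \mathfrak{L}_D$ as an integer combination of the rows of $\mathbf{D}$. Writing $A_i$ for the $i$-th row of $\mathbf{A}$, the $i$-th row of $\mathbf{D}$ is $D_i = (A_i \mid A_i \mathbf{C} \mid \cdots \mid A_i \mathbf{C}^{n-1})$, so if $\overline{V} = \sum_{i=0}^{n-1} t_i D_i$ with $t_i \in \mathbb{Z}$, then by linearity the $k$-th block is $V_k = \left(\sum_{i} t_i A_i\right)\mathbf{C}^k$. Setting $V := V_0 = \sum_{i} t_i A_i$, this immediately gives $V_0 \in \mathfrak{L}$ (claim 1, since $V_0$ is an integer combination of the basis rows of $\mathbf{A}$) together with the crucial identity $V_k = V_0 \mathbf{C}^k$ for every $k$.

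Next I would note that $\overline{V} \neq (0)^{n^2}$ forces $V_0 \neq 0$: were $V_0$ zero, every block $V_k = V_0 \mathbf{C}^k$ would vanish and $\overline{V}$ would be the zero vector, a contradiction. Now $V_k = V_0 \mathbf{C}^k$ is exactly the coefficient vector of $X^k V_0(X) \bmod E(X)$, so the $n \times n$ matrix whose rows are $V_0, V_1, \dots, V_{n-1}$ is precisely the matrix $\mathbf{B}$ attached to $V = V_0$ in Proposition \ref{PropInversible}. Finally, irreducibility of $E(X)$ guarantees that $V_0(X)$, a nonzero polynomial of degree at most $n-1 < \deg E$, is invertible modulo $E(X)$ (the quotient is a field); Proposition \ref{PropInversible} then yields directly that $(V_0, V_1, \dots, V_{n-1})$ is a base of a rank-$n$ sublattice $\mathfrak{L}' \subseteq \mathfrak{L}$, which is claim 2.

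The only point requiring genuine care — the main obstacle — is the bookkeeping of the concatenation structure of $\mathbf{D}$: one must make explicit that right-multiplication by $\mathbf{C}^k$ commutes with forming the integer combination $\sum_i t_i A_i$, so that the blocks of an \emph{arbitrary} element of $\mathfrak{L}_D$ are forced to be the successive images $V_0 \mathbf{C}^k$ of the single vector $V_0 \in \mathfrak{L}$. Once this identity is in hand, both assertions are immediate consequences of Proposition \ref{PropInversible} and Corollary \ref{CorIrreduc}, and no further computation is needed.
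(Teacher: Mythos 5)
Your proof is correct and follows essentially the same route as the paper's: identify each block of $\overline{V}$ as $V_k = V_0\mathbf{C}^k$ via the block structure of $\mathbf{D}$, conclude $V_0 \in \mathfrak{L}$ as an integer combination of rows of $\mathbf{A}$, and invoke Corollary \ref{CorIrreduc} (equivalently Proposition \ref{PropInversible}) for the base claim. You actually supply details the paper leaves implicit, notably that $\overline{V}\neq (0)^{n^2}$ forces $V_0\neq 0$, which is needed to apply the earlier results.
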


\begin{proof}
$V_0$ is a linear combination of  rows of $\mathbf{A}$, hence it belongs to  $\GL$.
Next,  since $V_i = V_0 \cdot \mathbf{C}^i $, for all $i \geqslant 1$, then, due to  \cref{CorIrreduc},  the vector $(V_0,V_1,...,V_{n-1})$ is a basis of a sublattice   $\GL' \subseteq \GL$.
\end{proof}

Hence, the last strategy is to choose a short vector $(V_0,V_1,\cdots, V_{n-1})$ of  $\GL_D$ and to build the basis $B$ of  $\GL'$ from $V$.

\subsection{Some examples of PMNS}

In these examples we give the value of the norm $ \left\|  \mathbf{B^T}\right\|_\infty$ for each reduced basis approach: LLL \cite{LLL1982}or BKZ \cite{Sch1994} or HKZ reduction \cite{KZ1873,LLS1990} of  $\mathbf{A}$, or the one of    \cref{CorIrreduc}, or  \cref{CorIrreducb}.
We remark that the last two approaches offer  the best results for polynomials $E(X)$ with small coefficients.
In    \cref{NbSystems}, we give  experimental results  with exhaustive searches.

\begin{example}\mbox{}

{\footnotesize $p =  112848483075082590657416923680536930196574208889254960005437791530871071177777$}

$n =  8$, $E(X) =  X^8 + X^2 + X + 1$,

{\footnotesize $\gamma =  14916364465236885841418726559687117741451144740538386254842986662265545588774$}

\begin{tabular}{llll}
 LLL: & $ \left\|  \mathbf{B^T}\right\|_\infty = 16940155314$ &
 BKZ: & $ \left\|  \mathbf{B^T}\right\|_\infty = 15289909984$\\

                                       HKZ: & $ \left\|  \mathbf{B^T}\right\|_\infty = 15289909984$& &\\

                Cor.  \ref{CorIrreduc}: & $ \left\|  \mathbf{B^T}\right\|_\infty = 13881325101$&

 Cor.  \ref{CorIrreducb}: & $ \left\|  \mathbf{B^T}\right\|_\infty = \textcolor{black}{12883199915}$

\end{tabular}
\end{example}

\begin{example}\mbox{}

{\footnotesize $p =  96777329138546418411606037850670691916278980249035796845487391462163262877831$}

$n =  8$, $E(X) =  X^8 - X^4 - 1$,

{\footnotesize $\gamma =  66378119609141043317728290217053385256449145407556727004132373270146455575461$}

\begin{tabular}{llll}
LLL: & $ \left\|  \mathbf{B^T}\right\|_\infty = 17955608045$ &
 BKZ: & $ \left\|  \mathbf{B^T}\right\|_\infty = 17955608045$\\

                                      HKZ: & $ \left\|  \mathbf{B^T}\right\|_\infty = 17955608045$& &\\

                    Cor.  \ref{CorIrreduc}: & $ \left\|  \mathbf{B^T}\right\|_\infty = 11628752571$&

Cor.  \ref{CorIrreducb}: & $ \left\|  \mathbf{B^T}\right\|_\infty = \textcolor{black}{10489321362}$

\end{tabular}
\end{example}

\begin{example}\mbox{}

{\footnotesize $p =  94234089378179148303661339351342500658910595299680545500602453424882978290351$}

$n =  8$, $E(X) =  X^8 + X^4 - X^3 + 1$,

{\footnotesize $\gamma =  55857489577292751855009098551500852039618350925837275620376166398325678525151$}

\begin{tabular}{llll}
LLL: & $ \left\|  \mathbf{B^T}\right\|_\infty =  \textcolor{black}{12305954812}$ &
 BKZ: & $ \left\|  \mathbf{B^T}\right\|_\infty = 12305954812$\\

                     HKZ: & $ \left\|  \mathbf{B^T}\right\|_\infty = 12305954812$& &\\

    Cor.  \ref{CorIrreduc}: & $ \left\|  \mathbf{B^T}\right\|_\infty ={15570303402}$&

 Cor.  \ref{CorIrreducb}:  & $ \left\|  \mathbf{B^T}\right\|_\infty = 14857375293$

\end{tabular}
\end{example}

\begin{example}\mbox{}

{\footnotesize $p =  96777329138546418411606037850670691916278980249035796845487391462163262877831$}

$n =  8$, $E(X) =  X^8 + 6$,

{\footnotesize $\gamma =  5538274654329514802181726618906590237936295237553666062542808070676484572674$}

\begin{tabular}{llll}
LLL: & $ \left\|  \mathbf{B^T}\right\|_\infty = \textcolor{black}{12509178620}$ &
 BKZ: & $ \left\|  \mathbf{B^T}\right\|_\infty = 12509178620$\\

                     HKZ: & $ \left\|  \mathbf{B^T}\right\|_\infty = 12509178620$& &\\

   Cor.  \ref{CorIrreduc}: & $ \left\|  \mathbf{B^T}\right\|_\infty = {47611052126}$&

Cor.  \ref{CorIrreducb}:  & $ \left\|  \mathbf{B^T}\right\|_\infty = 40733847267$

\end{tabular}
\end{example}

\section{Suitable irreducible polynomials for PMNS}
\label{sec:classes}
\label{reduction_polynomials}

In  \cref{theorem_PMNS_generalise}, we proved that if $E(X)$ is an irreducible polynomial, then we can define a PMNS $\GB = (p,n,\gamma,\rho)_E$ depending of $E(X)$.
For efficiency reason on reducing modulo $E(X)$, $E(X)$ must respect some criteria, in particular with respect to the size of the digits in $\GB = (p,n,\gamma,\rho)_E$.
We thus define what can be a suitable PMNS irreducible reduction polynomial.

\subsection{Suitable PMNS reduction polynomial}
\begin{definition}\label{def:suitable}
A polynomial $E(X)$ is a suitable PMNS reduction polynomial, if:
\begin{enumerate}
\item  $E(X)$ is irreducible in $\mathbb{Z}[X]$,
\item  $E(X) = X^n + a_k X^k + \dots + a_1 X + a_0 \in \mathbb{Z}[X]$, with $n \geqslant 2$ and $k \leqslant \frac{n}{2}$,
\item most of the coefficients $a_i$ are zero,  other ones are very small (if possible equal to $\pm1$) compare to $p^{1/n}$.
\end{enumerate}
\end{definition}

The second item ensures that the polynomial reduction modulo $E(X)$ of a polynomial $T(X)$ of degree lower than $2n$ is done in two steps, i.e., by two times, we split $T(X) = T_1(X) X^n + T_0(X)$ with $T_1(X)$ and $T_0(X)$ of degree lower than $n$, and we substitute  $X^n \bmod E(X) = - (\sum_{i=0}^{k}a_iX^i) \bmod E(X)$.

The third item allows one to give a bound on the coefficients of  $T(X) \bmod E(X)$, namely $\| T(X) \bmod E(X) \|_\infty < s \| T(X)\|_\infty$, where $s$ is the $l1-$norm of the $(2n-1)\times n$ matrix $S$ whose row $i$ represents the coefficients of $X^{i} \pmod{E(X)}$ for $i=0\dots2n-1$ (see Prop. 2.3 of \cite{DDEMMV2019}). As a consequence, if $G(X)$ and $F(X)$  are  two elements of the PMNS,  i.e., $\|F(X)\|_\infty < \rho$ and $\|G(X) \|_\infty< \rho$, then $\|F(X) \times G(X) \|_\infty< n\rho^2$ and  $\|F(X) \times G(X)  \pmod{E(X)}\|_\infty< sn\rho^2$.

\section*{Why consider alternatives for $E(X)$} Since the definition of the PMNS representation system, all the research focused on the polynomial $E(X)=X^n-\lambda$ because the external reduction can be efficiently performed when $\lambda$ is ``small'' (often a power of 2 to use logical operator) \cite{BIP2015s,NP2008,amns_09,amns_12,DDEMMV2019,DDV2020,sike,MPHELL,Negre_2021}. Now, from proposition \ref{Prop:irreducbound}, we know that the size of the coefficients used in the PMNS representation system depends on the parameter $s$ which in turn depends on the coefficients of the polynomial $E(X)$ since $s=\|M^T\|_\infty$ where $M$ is the $(2n-1)\times n$ matrix whose rows are the coefficients of $X^i\bmod E(X)$. Hence the smaller $s$ is, the smaller $\rho$ is. As a toy example, let us consider $n=6$ and the irreducible polynomial $E(X)=X^6+4$, then it is easy to see that $s=5$ since each column of $M$ contains only two elements (1 and -4), except the last one which contains only one element equal to 1. Now let us consider $E(X)=X^6-X-1$, then it is irreducible (see proposition \ref{prop:trinome}) and a simple computation gives $s=3$. This value for $s$ can also be obtained considering the polynomial $E(X)=X^6-2$ which corresponds to the AMNS case. In fact, for the AMNS case, one can see that $s=|\lambda|+1$, hence $s$ is proportional to $\lambda$. So, the only way to minimize $s$ is to take $\lambda=\pm 2$ (a simple argument shows that $\lambda=\pm 1$ does not allow to build an AMNS). Notice that the reduction modulo $X^6-X-1$ is very efficient and competitive with the one computed with $X^6-2$. Our goal to study suitable PMNS reduction polynomial is thus to enlarge the set of polynomials which can be used to define a PMNS without being restricted to the exclusive choise of the AMNS subset taking $\lambda=\pm 2$. We propose to developpers a set of polynomials for which the value $s$ can easily be computed so that depending on the context (software or hardware), they can select the better choice which fits their constraints.
\\
Another point of view concerns countermeasure to side channel attack. In the spirit of what has been proposed in \cite{BaImLi04a}, 
 one may consider to build for a fixed prime $p$ numerous PMNS representations. Let us consider the ECC context. Once $kP$ must be computed, first we choose the PMNS system to use, than we compute $kP$. This approach complements other countermeasures described in \cite{DDEMMV2019,Negre_2021}.
 Now, from a practical point of view, if we focus on the polynomials $E(X)=X^n-\lambda$ with $\lambda$ a power of 2, this will drastically reduce the choice of possible PMNS. Hence our goal is to enlarge the possible choice of PMNS for a prime $p$ by considering other polynomials $E(X)$ with small coefficients so that the external reduction can be efficiently performed and so that $\rho$ be small.
\\
\color{black}
According to the first item of \cref{def:suitable}, a suitable polynomial is irreducible. In the sequel, we adapt some classical irreducibility criteria and give examples of irreducible polynomials with few non-zero coefficients satisfying the two other items.

\subsection{Classical  polynomial irreducibility criteria}

To verify the first item of \cref{def:suitable}, we can use general criteria  such as the Sch\"onemann-Eisenstein criterion, Dumas'  criterion \cite{Dum1906} or the generalization given by N. C.   Bonciocat in \cite{Bon2015}. We adapt these criteria to our purpose, namely to a monic polynomial $E(X)= X^n + a_k X^k + \dots + a_1 X + a_0$, with  $k \leqslant \frac{n}{2}$.

\begin{proposition}[from Dumas' criterion \cite{Dum1906}] If there exists a prime $\mu$ and an integer $\alpha$  such that, $\mu^\alpha \mid a_0$, $\mu^{\alpha+1} \nmid a_0$,  $\mu^{\lceil \alpha (n-i)/n \rceil} \mid a_i$,
and $\gcd(\alpha,n) = 1$, then $E(X)= X^n + a_k X^k + \dots + a_1 X + a_0$ is irreducible over $\mathbb{Z}[X]$.
\end{proposition}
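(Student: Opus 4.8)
The plan is to argue through the Newton polygon of $E(X)$ with respect to the prime $\mu$. Write $\ord_\mu$ for the $\mu$-adic valuation on $\mathbb{Z}$ and consider the points $P_i=(i,\ord_\mu(a_i))$ for those $i$ with $a_i\neq 0$; since $a_n=1$ we always include $P_n=(n,0)$. The hypothesis $\mu^\alpha\mid a_0$ together with $\mu^{\alpha+1}\nmid a_0$ says that $P_0=(0,\alpha)$ exactly, while the divisibility $\mu^{\lceil\alpha(n-i)/n\rceil}\mid a_i$ gives $\ord_\mu(a_i)\ge\lceil\alpha(n-i)/n\rceil\ge\alpha(n-i)/n$ for every $i$. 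Now $\alpha(n-i)/n$ is precisely the height at abscissa $i$ of the segment $L$ joining $(0,\alpha)$ to $(n,0)$, so every $P_i$ lies on or above $L$; as the two endpoints $P_0$ and $P_n$ lie on $L$ and are attained with equality, the lower boundary of the convex hull of the $P_i$ is exactly the single segment $L$, of slope $-\alpha/n$.

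Next I would record the product rule for Newton polygons, which is the substance of Dumas' theorem: if $E=FG$, then the multiset of slopes of the Newton polygon of $E$, each counted with its horizontal length, is the union of the corresponding multisets attached to $F$ and to $G$. In particular, because the polygon of $E$ exhibits the single slope $-\alpha/n$, every edge of the Newton polygon of any monic factor of $E$ must also have slope $-\alpha/n$.

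Then I would suppose, for contradiction, that $E=FG$ with $F,G\in\mathbb{Z}[X]$ nonconstant; since $E$ is monic the leading coefficients of $F$ and $G$ are units, so we may take both $F$ and $G$ monic, say $\deg F=d$ with $1\le d\le n-1$. The Newton polygon of $F$ then runs from $(0,\beta)$, where $\beta=\ord_\mu(F(0))\in\mathbb{Z}_{\ge 0}$, to $(d,0)$, and by the previous paragraph it consists of a single segment of slope $-\beta/d=-\alpha/n$. Hence $\beta n=\alpha d$, so $n\mid\alpha d$; as $\gcd(\alpha,n)=1$ this forces $n\mid d$, which is impossible for $1\le d\le n-1$. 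Therefore $E(X)$ has no proper factorization and is irreducible over $\mathbb{Z}[X]$.

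The main obstacle is the product rule invoked in the second step; everything else is a routine translation of the hypotheses followed by an elementary divisibility argument. One may simply cite Dumas \cite{Dum1906}, or else prove the additivity of slopes directly, for instance by passing to the completion $\mathbb{Q}_\mu$ and reading the polygon's slopes as the $\mu$-adic valuations of the roots, whose multiset is additive under multiplication of polynomials. I would note that the prescribed shape $E(X)=X^n+a_kX^k+\dots+a_0$ with $k\le n/2$ plays no role beyond forcing the omitted coefficients to vanish (so that $\ord_\mu=+\infty$ and their points lie trivially above $L$), so the argument applies verbatim to the sparse polynomials of interest.
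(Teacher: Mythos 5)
Your argument is correct, and in fact the paper offers no proof of this proposition at all: it is stated as a direct import from Dumas \cite{Dum1906}, so there is no internal argument to compare against. What you have written is precisely the classical Newton--polygon proof of that cited criterion. The translation of the hypotheses is right: $\mu^{\alpha+1}\nmid a_0$ forces $a_0\neq 0$, the condition $\gcd(\alpha,n)=1$ with $n\geq 2$ forces $\alpha\geq 1$, and the bounds $\ord_\mu(a_i)\geq \alpha(n-i)/n$ make the $\mu$-adic Newton polygon of $E$ the single segment from $(0,\alpha)$ to $(n,0)$ of slope $-\alpha/n$. The endgame $\beta n=\alpha d$, hence $n\mid d$, is exactly where the coprimality hypothesis is consumed, and it correctly rules out any monic factor of degree $1\leq d\leq n-1$. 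The only non-elementary ingredient is the additivity of Newton polygons under multiplication; you identify it explicitly, and it is legitimate either to cite it from Dumas or to derive it from the additivity of the multiset of $\mu$-adic valuations of the roots over $\mathbb{Q}_\mu$. Your closing remark that the sparsity constraint $k\leq n/2$ is irrelevant to irreducibility (vanishing coefficients have infinite valuation and sit trivially above the segment) is also accurate; that constraint only matters for the efficiency of the reduction, not for this proposition.
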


For example, $E(X)= X^n + \mu X^k +\mu$ is irreducible according to this criterion. If $k< n/2$ and $\mu << p^{1/n}$, then $E(X)$ is a suitable PMNS reduction polynomial.

\begin{proposition}[from Corollary 1.2  \cite{Bon2015}]
Let $E(X) = X^n + a_k X^k + \dots + a_1 X + a_0$, $a_0 \neq 0$, let $t \geqslant 2$ and
let $\mu_1,\dots,\mu_t$ be pairwise distinct numbers, and $\alpha_1,\dots,\alpha_t$ positive integers. 
If, for  $j = 1,\dots,t$,  and  $i = 0,\dots,k$,   $\mu_j^{\alpha_j} \mid a_i $, $\mu_j^{\alpha_j+1} \nmid a_0$, and  $\gcd(\alpha_1,\dots,\alpha_t,n) =1$, then $E(X)$ is irreducible over  $\mathbb{Z}[X]$.
\end{proposition}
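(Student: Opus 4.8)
The plan is to prove this multi-prime criterion through Newton polygons, in the spirit of Dumas' theorem, combining the local information at each prime $\mu_j$ by means of the concluding gcd condition. First I would fix a prime $\mu_j$ and form the $\mu_j$-adic Newton polygon of $E$, namely the lower convex hull of the points $(i, v_{\mu_j}(a_i))$ for $0 \le i \le n$, where $v_{\mu_j}$ denotes the $\mu_j$-adic valuation and $a_n = 1$. The hypotheses give $v_{\mu_j}(a_0) = \alpha_j$ exactly (from $\mu_j^{\alpha_j} \mid a_0$ and $\mu_j^{\alpha_j+1} \nmid a_0$), $v_{\mu_j}(a_i) \ge \alpha_j$ for every nonzero intermediate coefficient (these all occur among $a_1,\dots,a_k$), and $v_{\mu_j}(a_n) = 0$. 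Since at abscissa $i$ the segment joining $(0,\alpha_j)$ to $(n,0)$ has height $\alpha_j(n-i)/n \le \alpha_j$, every intermediate point lies on or above that segment, so the Newton polygon of $E$ at $\mu_j$ is this single edge, of slope $-\alpha_j/n$.

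Next I would invoke Dumas' theorem on the Newton polygon of a product. Arguing by contradiction, suppose $E$ admits a nontrivial factorization; by Gauss's lemma I may take it in the form $E = P Q$ with $P, Q$ monic in $\mathbb{Z}[X]$ and $d = \deg P$ satisfying $1 \le d \le n-1$. Dumas' theorem says the Newton polygon of $E$ at $\mu_j$ is obtained by concatenating the edges of the polygons of $P$ and $Q$ in order of increasing slope. As the polygon of $E$ is a single edge of slope $-\alpha_j/n$, convexity forces every edge of $P$ to have that same slope, so the polygon of $P$ is a single segment from $(0, v_{\mu_j}(P(0)))$ to $(d, 0)$. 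Reading off the slope yields $v_{\mu_j}(P(0)) = d\alpha_j/n$, and since this valuation is a nonnegative integer I obtain $n \mid d\alpha_j$.

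This divisibility holds for every $j = 1,\dots,t$, and it is here that the several primes are tied together. Writing $D = \gcd(\alpha_1,\dots,\alpha_t)$ and using $dD = \gcd(d\alpha_1,\dots,d\alpha_t)$, the $t$ relations $n \mid d\alpha_j$ give $n \mid dD$. The hypothesis $\gcd(\alpha_1,\dots,\alpha_t,n) = \gcd(D,n) = 1$ then forces $n \mid d$, which is incompatible with $1 \le d \le n-1$. This contradiction shows that no nontrivial factorization exists, hence $E$ is irreducible over $\mathbb{Z}[X]$.

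The routine parts, namely the valuation bookkeeping and the Gauss's lemma reduction to monic integer factors, are standard; the step deserving care is the use of Dumas' theorem, where one must argue that a single-edge Newton polygon can only arise from factors whose own polygons are single edges of the same slope, since it is exactly this that makes the integrality of $v_{\mu_j}(P(0))$ deliver $n \mid d\alpha_j$. The genuinely new ingredient compared with the single-prime statement (the preceding Dumas proposition) is the final gcd argument: each prime alone only yields $n \mid d\alpha_j$, which need not determine $d$, whereas the joint condition $\gcd(D,n)=1$ assembles these into $n \mid d$.
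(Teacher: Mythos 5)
Your argument is correct, but note that the paper itself gives no proof of this proposition: it is stated as an adaptation of Corollary~1.2 of Bonciocat \cite{Bon2015} and the burden is left entirely to that citation. What you have written is therefore a genuine, self-contained proof where the paper has none, and it follows the standard Newton-polygon route for Dumas-type multi-prime criteria (which is also the spirit of Bonciocat's original argument). The key steps all check out: the hypotheses force $v_{\mu_j}(a_0)=\alpha_j$ exactly and $v_{\mu_j}(a_i)\ge\alpha_j$ for the nonzero intermediate coefficients, so the $\mu_j$-adic polygon is the single edge from $(0,\alpha_j)$ to $(n,0)$; Dumas' product theorem then forces any monic factor $P$ of degree $d$ to have a single-edge polygon of the same slope, giving $v_{\mu_j}(P(0))=d\alpha_j/n\in\mathbb{Z}$ and hence $n\mid d\alpha_j$; and the final passage from $n\mid d\alpha_j$ for all $j$ to $n\mid d\gcd(\alpha_1,\dots,\alpha_t)$ and then, via $\gcd(\alpha_1,\dots,\alpha_t,n)=1$, to the contradiction $n\mid d$ is exactly the point where the several primes cooperate. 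Two small remarks: the statement in the paper says only ``pair-wise distinct numbers,'' and your proof (like Bonciocat's corollary) silently and necessarily assumes the $\mu_j$ are primes, since otherwise the $\mu_j$-adic valuation is not available; and the reduction to monic integer factors via Gauss's lemma is legitimate because $E$ is monic, so both factors can be normalized to be monic and of positive degree. With the primality of the $\mu_j$ made explicit, your proof is complete.
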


For example, $E(X)= X^n + \mu_1^{\alpha_1}\mu_2^{\alpha_2} X^k +\mu_1^{\alpha_1}\mu_2^{\alpha_2}$, with $\gcd(\alpha_1,\alpha_2,n) =1$, is irreducible with this criterion.  If $k< n/2$ and $\mu_1^{\alpha_1}\mu_2^{\alpha_2} << p^{1/n}$, then $E(X)$ is a suitable PMNS reduction polynomial.

\subsection{Suitable   Cyclotomic Polynomials for PMNS}\label{cyclo_irreducible}
A well-known set of irreducible polynomials in $\mathbb{Z}\left[X\right]$ is the set of cyclotomic polynomials.
Let us denote by  ${\tt ClassCyclo(n)}$ the class of suitable cyclotomic polynomials for PMNS, whose degree is $n$.

\begin{proposition}\label{CycloProp} For $n>1$,  $ \varPhi_m(X) $ the $m$-th cyclotomic polynomial is a suitable polynomial  if and only if $\varphi(m)=n=2^i3^j$ with $i \geqslant 1,j \geqslant 0$.

 (i.e., ${\tt ClassCyclo(n)} \neq  \emptyset$ if and only if $n=2^i3^j$ with $i \geqslant 1,j \geqslant 0$.)
\end{proposition}

\begin{proof}

For $m >1$ $ \varPhi_m(X) $ the $m$-th cyclotomic polynomial,  is  self-reciprocal, $\varPhi_m(X) = X^{n}\varPhi_m(\frac{1}{X})$ with $n= \varphi(m) $ the degree of $\varPhi_m(X)$, (i.e.,the coefficients  $a_i $ of the term $X^i$ are equal to those $ a_{n-i}$  of the terms $X^{n-i}$ for all $i$).
Thus, suitable cyclotomic polynomials will be of the form  $X^{2n'} + a X^{n'} +1$ with $n=2n'$.

If $X_0$ is a root of a cyclotomic  $X^{2n'} + a X^{n'} +1$,   then $X^{n'}_0$ is a root of unity and  a root of $X^2+aX+1$,  as its conjugate too, hence we have $a=2\cdot cos \theta$. Since $a$ is an integer, we have $a=\pm2,\pm1,0$. But, for $a=\pm 2$ the polynomial  $X^{2n'} \pm 2 X^{n'} +1 = (X^{n'} \pm1)^2$ is not irreducible.
Therefore $a = \pm 1 , 0$.
\begin{itemize}
\item a = 0, we consider $  X^{n}+1$.  If $n=2^i\cdot t$ with $t>1$ odd then 
$X^n+1 = (X^{2^i} + 1) (X^{2^i\cdot (t-1)} + X^{2^i\cdot (t-2)}+\dots+1)$.
Thus $n=2^i$ and the cyclotomic polynomials are  $\varPhi_{2^{i+1}} (X)= X^{2^i} +1$.

\item $a=1$, then we look for  $ \varPhi_m  (X)  = X^{2n'}+ X^{n'} + 1$, and $e^{\frac{2i\pi}{3n'}}$ is a root of this polynomial.
$e^{\frac{2i\pi}{3n'}}$ is also a root of $X^{3n'}-1$.  We know that $X^{3n'}-1$ is the product of the cyclotomic $\varPhi_d(X)$ with $d | 3n'$ and $e^{\frac{2i\pi}{3}}$ is one of its roots, thus
$e^{\frac{2i\pi}{3}}$  is a root of $\frac{X^{3n'}-1}{X^{2n'}+ X^{n'} + 1}= X^{n'} - 1$. Hence $n'$ is a multiple of $3$ and by induction $n'=3^j$ and 
$ \varPhi_m  (X)  = \varPhi_{3^{j+1}}  (X)  = X^{2 \cdot 3^j}+ X^{3^j} + 1$ with $j \geq 0$.

\item $a=-1$, then we look for  $ \varPhi_m  (X)  = X^{2n'} - X^{n'} + 1$. 
Let $n'=2^{i-1}\cdot \alpha$, with $\alpha$ odd and $i\geq 1$, then 
$X^{2^{i}\alpha} - X^{2^{i-1}\alpha} + 1 = (-X^{2^{i-1}})^{2\alpha}+ (-X^{2^{i-1}})^\alpha +1$, we can refer to the previous case to deduce that $\alpha = 3^j$. Thus $ \varPhi_m  (X)  = \varPhi_{2^{i}3^{j+1}}  (X)  = X^{2^{i} \cdot 3^j}- X^{2^{i-1}3^j} + 1$ with $i\geq1$ and $j\geq 0$.
\end{itemize}

We have proved that for $n>1$,
if ${\tt ClassCyclo(n)} \neq  \emptyset$ then $n=2^i3^j$ with $i \geqslant 1,j \geqslant 0$.
\\

Reciprocally, for $n=2^i3^j$, $i\geqslant 1$, $j\geqslant 0$, we have to show that there exists a suitable cyclotomic polynomial whose degree is $n$. 
\\

Let $n=2^i$ ($i\geqslant 1$), since $n=\varphi (m)$, then $m=2^{i+1}$ and $\varPhi_{2^{i+1}}(X) = X^{2^i}+1$ is a suitable cyclotomic polynomial.

Let $n=2.3^j$ ($j\geqslant 1$), since $n=\varphi (m)$, then $m=3^{j+1}$ and $\varPhi_{3^{j+1}}(X) = X^{2.3^j}+X^{3^j}+1$ is a suitable cyclotomic polynomial.

Let $n=2^i3^j$ ($i\geqslant 2$, $j\geqslant 1$), since $n=\varphi (m)$, then $m=2^i3^{j+1}$ and $\varPhi_{2^i3^{j+1}}(X) = X^{2^i3^j}-X^{2^{i-1}3^j}+1$ is a suitable cyclotomic polynomial.
\end{proof}

\subsection{Suitable reduction  $\{-1,1\}$-quadrinomials}
In \cite{FJ2006}, Finch and Jones give criteria of irreducibility for polynomials $X^a+\beta X^b+\gamma X^c+\delta$ with $\beta,\gamma, \delta \in\{ -1, 1\}$ and $a>b>c>0$.

\begin{proposition}[Theorem 2 in \cite{FJ2006} ]
The quadrinomial  $X^a+\beta X^b+\gamma X^c+\delta$  with $\beta,\gamma, \delta \in\{ -1, 1\}$ and $a>b>c>0$, is irreducible over $\mathbb{Z}[X]$\\
 if and only if $\gcd(a,b,c) = 2^t m$ with $m$ odd, 
 and it satisfies one of the following conditions  :
\begin{enumerate}
  \item $(\beta,\gamma,\delta) = (1,1,1)$ and $\overline{a}\overline{b}\overline{c}\equiv 1 \pmod{2}$,
    \item $(\beta,\gamma,\delta) = (-1,1,1)$, $b'-c' \not\equiv 0 \pmod{2 \overline{a}}$, $b' \not\equiv 0 \pmod{2 \overline{b}}$ and $a'-b' \not\equiv 0 \pmod{2 \overline{c}}$, 
    \item $(\beta,\gamma,\delta) = (1,-1,1)$, $b'-c' \not\equiv 0 \pmod{2 \overline{a}}$, $a'-c' \not\equiv 0 \pmod{2 \overline{b}}$ and $c' \not\equiv 0 \pmod{2 \overline{c}}$,
  \item $(\beta,\gamma,\delta) = (1,1,-1)$,  $a' \not\equiv 0 \pmod{2\overline{a}}$,  $b' \not\equiv 0 \pmod{2\overline{b}}$ and  $c' \not\equiv 0 \pmod{2\overline{c}}$,
  \item $(\beta,\gamma,\delta) = (-1,-1,-1)$, $a' \not\equiv 0 \pmod{2 \overline{a}}$, $a'-c' \not\equiv 0 \pmod{2 \overline{b}}$ and $a'-b' \not\equiv 0 \pmod{2 \overline{c}}$.
  \end{enumerate}
  Where $a' = a/2^t$,  $b' = b/2^t$, $c' = c/2^t$ and  $\overline{a}=\gcd(a',b'-c')$,  $\overline{b}=\gcd(b',a'-c')$,  $\overline{c}=\gcd(c',a'-b')$.
  
  We call this  class of suitable reduction quadrinomials ${\tt ClassQuadrinomials}$, and ${\tt ClassQuadrinomials(n)}$ is the set of such quadrinomials of degree $n$.

\end{proposition}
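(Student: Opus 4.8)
The statement is the Finch--Jones result quoted as Theorem~2 of \cite{FJ2006}; the natural way to reconstruct its proof follows the classical theory of factorization of lacunary (sparse) polynomials, in the spirit of Ljunggren, Selmer and Schinzel. The plan is to split the problem into three parts: (i) an elementary reduction that strips the even part of the exponents, (ii) a symmetry analysis that collapses the eight possible sign patterns $(\beta,\gamma,\delta)$ down to the five listed cases, and (iii) the detection of cyclotomic factors, which is where the arithmetic conditions on $\overline a,\overline b,\overline c$ enter.

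First I would perform the exponent reduction. Since every exponent of $f(X)=X^a+\beta X^b+\gamma X^c+\delta$ (including the constant term, of exponent $0$) is divisible by $2^t$, we may write $f(X)=F(X^{2^t})$ with $F(Y)=Y^{a'}+\beta Y^{b'}+\gamma Y^{c'}+\delta$ and $\gcd(a',b',c')$ odd. Capelli's theorem on the irreducibility of $F(X^{n})$ then reduces the irreducibility of $f$ over $\mathbb{Z}$ to that of $F$ together with the requirement that no root of $F$ be a square in its splitting field; because the constant term $\delta=\pm1$ is a unit, this auxiliary requirement is mild and is ultimately subsumed by the parity (the factor $2$) appearing in the congruences. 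This step is routine once Capelli's theorem is granted. Next I would exploit the two available symmetries to cut down the sign cases: passing to the reciprocal polynomial $X^a f(1/X)$ and, when the relevant exponents have suitable parities, the substitution $X\mapsto -X$, permutes and negates the triple $(\beta,\gamma,\delta)$. Tracking these two actions shows that the three omitted patterns $(-1,-1,1)$, $(-1,1,-1)$, $(1,-1,-1)$ are each equivalent to one of the five listed patterns, with the congruence conditions transforming accordingly; the care needed here is to check that the definitions of $\overline a,\overline b,\overline c$ are invariant, up to the expected relabelling, under these operations.

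The heart of the argument is step (iii): characterising the factorizations of $F$. The easy direction of the equivalence is constructive. If one of the stated congruences fails, I would exhibit a primitive root of unity $\zeta$ with $F(\zeta)=0$, so that the corresponding cyclotomic polynomial divides $F$ and hence $f$ is reducible. Concretely, $F(\zeta)=0$ forces a vanishing sum $\zeta^{a'}+\beta\zeta^{b'}+\gamma\zeta^{c'}+\delta=0$ of four roots of unity with coefficients $\pm1$; classifying such vanishing sums (they must split into conjugate cancelling pairs) translates exactly into the residue conditions modulo $2\overline a$, $2\overline b$ and $2\overline c$. The hard direction --- that when \emph{all} the conditions hold $F$ admits no nontrivial factor --- is the main obstacle. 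One must show both that $F$ has no cyclotomic factor (handled by the root-of-unity classification just described) and, crucially, that the non-cyclotomic part of $F$ is itself irreducible.

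This last point is where essentially all the difficulty is concentrated: it does not follow from elementary congruences but requires the deep machinery on reducibility of lacunary polynomials, namely a Ljunggren/Schinzel-type theorem guaranteeing that a $\{-1,1\}$-quadrinomial with reduced exponents and no cyclotomic factor is irreducible. I would isolate and invoke that result as the engine of the proof, after which the remaining work is to verify, case by case across the five sign patterns, that the conditions of the theorem are precisely equivalent to the absence of a cyclotomic factor. That book-keeping is lengthy but mechanical, so I expect the genuine obstacle to be the non-cyclotomic irreducibility input rather than any of the combinatorial translations.
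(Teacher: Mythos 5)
The paper does not prove this proposition at all: it is imported verbatim as Theorem~2 of Finch and Jones \cite{FJ2006} and used as a black box to define {\tt ClassQuadrinomials}, so there is no in-paper argument to compare yours against. Judged on its own, your outline correctly identifies the architecture of proofs of this kind (strip the common power of $2$ from the exponents via Capelli, classify the cyclotomic factors through vanishing sums of four roots of unity, and invoke a Ljunggren/Schinzel-type theorem asserting that the non-cyclotomic part of a $\{-1,1\}$-quadrinomial with reduced exponents is irreducible). But it is a roadmap rather than a proof: the entire ``if'' direction rests on that last deep input, which you explicitly leave unproved and uncited, and the case-by-case translation of the vanishing-sum analysis into the congruences modulo $2\overline a$, $2\overline b$, $2\overline c$ --- which is where the actual content of the five listed conditions lives --- is only asserted to be ``mechanical,'' not carried out.

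There is also one concrete misstep. You claim the three omitted sign patterns $(-1,-1,1)$, $(-1,1,-1)$, $(1,-1,-1)$ are handled by showing each is \emph{equivalent}, via the reciprocal polynomial and $X\mapsto -X$, to one of the five listed patterns ``with the congruence conditions transforming accordingly.'' That cannot be the mechanism: the ``if and only if'' in the statement forces those three patterns to be reducible \emph{always}, whereas the listed patterns are irreducible for suitable exponents, so no equivalence preserving (ir)reducibility can identify them. (Indeed the reciprocal map sends $(\beta,\gamma,\delta)$ to $(\gamma\delta,\beta\delta,\delta)$ and fixes each of the three omitted patterns.) The correct and elementary reason they are excluded is that these are precisely the patterns with $1+\beta+\gamma+\delta=0$, so $X-1$ divides the quadrinomial. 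You should replace the symmetry argument for this step by that observation, and either supply or precisely cite the Ljunggren-type irreducibility theorem before the outline can count as a proof.
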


For example, $E(X) =X^{2^t7m}+X^{2^t3m}+X^{2^tm}+1$, with $m$ odd,  is a suitable PMNS reduction quadrinomial verifying the first condition.

\subsection{Suitable  reduction  $\{-1,1\}$trinomials}

In this part we refer to   a paper  of W.H. Mills \cite{Mil1985b} and one of W. Ljunggren \cite{Lju1960}. The first one gives a criterion on quadrinomials and roots of unity, the second one gives an application to trinomials.

\begin{proposition}  
\label{prop:trinome}
We note $\gcd(n,m)=d$ and $n= d \cdot n_1$, $m=d \cdot m_1$. If $n_1+m_1 \not\equiv 0 \mod 3$, then the polynomial $X^n+ \beta X^m + \delta$ with $\delta, \beta \in \{  -1,1\}$ and $n > 2m > 0$ is irreducible over $ \mathbb{Z}[X]$.

The class of the suitable reduction trinomials verifying these criteria is named ${\tt ClassTrinomials}$,  and ${\tt ClassTrinomials(n)}$ represents the set of the trinomials of degree $n$.

\end{proposition}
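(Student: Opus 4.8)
The plan is to reduce the irreducibility of the trinomial to the \emph{absence of a cyclotomic factor}, and then to rule out such a factor by an explicit computation with roots of unity. The structural input I would borrow from \cite{Mil1985} and \cite{Lju1960} is the following classification: a trinomial $X^n + \beta X^m + \delta$ with $\beta,\delta \in \{-1,1\}$ and $n > m > 0$ splits over $\mathbb{Z}[X]$ as a product of a (possibly trivial) cyclotomic part and a single irreducible non-cyclotomic factor; consequently it is irreducible as soon as it has no nontrivial cyclotomic factor, and the only candidate roots of unity are controlled by the associated \emph{primitive} trinomial. This classification is precisely the content of Mills' quadrinomial criterion together with Ljunggren's application to trinomials, and it is the step I expect to be the genuine obstacle: everything downstream is elementary, whereas justifying that a trinomial of this shape cannot factor in any non-cyclotomic way is exactly where I would lean on those two papers rather than reprove them.

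First I would reduce to the coprime case. Since $d=\gcd(n,m)$ divides both exponents, write $X^n + \beta X^m + \delta = g(X^d)$ with $g(Y) = Y^{n_1} + \beta Y^{m_1} + \delta$ and $\gcd(n_1,m_1)=1$. A root of unity $\xi$ is a root of the trinomial exactly when $\xi^d$ is a root of $g$, so the trinomial has a cyclotomic factor if and only if $g$ admits a root of unity among its roots. By the classification, the only possibilities are primitive cube roots $\omega$ (with $\omega^2+\omega+1=0$) and primitive sixth roots $\zeta$ (with $\zeta^2-\zeta+1=0$); as both lie among the sixth roots of unity, it suffices to decide whether any sixth root of unity can be a root of $g$.

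The elementary core comes next: three complex numbers of modulus one can sum to zero only if they are equally spaced by $120^\circ$, i.e. are the three cube roots of unity up to a common rotation. Applying this to the three unit-modulus terms $\eta^{n_1}$, $\beta\,\eta^{m_1}$ and $\delta$, where $\eta$ is a primitive sixth root of unity and $\beta,\delta=\pm1$ are themselves powers of $\eta$ (since $-1=\eta^3$), I would conclude that $g(\eta)=0$ forces the three exponents $n_1$, $m_1$ and $0$ to be pairwise distinct modulo $3$. Because $\gcd(n_1,m_1)=1$, this can happen only when $\{n_1 \bmod 3, m_1 \bmod 3\} = \{1,2\}$, that is, exactly when $n_1+m_1 \equiv 0 \pmod 3$.

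Assembling the pieces gives the statement: under the hypothesis $n_1+m_1 \not\equiv 0 \pmod 3$ no sixth root of unity is a root of $g$, hence neither $g$ nor $X^n+\beta X^m+\delta = g(X^d)$ has a cyclotomic factor, and the classification then yields irreducibility over $\mathbb{Z}[X]$. I would finally note that the hypothesis $n>2m>0$ is not needed for irreducibility itself—the weaker $n>m>0$ suffices in Ljunggren's theorem—and is imposed only to force $m \leq n/2$, so that the trinomial is a \emph{suitable} PMNS reduction polynomial in the sense of the earlier definition.
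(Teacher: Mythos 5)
Your proof is correct in substance but follows a genuinely different route from the paper's. The paper multiplies the trinomial by $X^n-\delta$ to obtain the quadrinomial $F(X)=X^{2n}+\beta X^{n+m}-\beta\delta X^m-1$ and invokes Mills' Theorem~2, which forces it to rule out three exceptional parameter families $(8r,7r,r)$, $(8r,4r,2r)$, $(8r,6r,4r)$ before reducing the problem to showing that no root of $E(X)$ is a root of unity; it then settles that question by an explicit real/imaginary-part computation with sines and cosines, extracting divisibility conditions $t\mid(n_1+m_1)$, $t\mid 3(n_1-m_1)$ (resp.\ $t\mid 2(n_1+m_1)$, $t\mid 6(n_1-m_1)$). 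You instead cite Ljunggren's trinomial classification directly (cyclotomic part times a single irreducible non-cyclotomic factor), which eliminates the exceptional-case bookkeeping, and you replace the trigonometric computation by the geometric fact that three unit-modulus numbers sum to zero only if equally spaced by $120^\circ$. That argument is cleaner and more informative: it shows at once that a root of unity forces $n_1$, $m_1$, $0$ to be pairwise distinct modulo $3$, hence $n_1+m_1\equiv 0\pmod 3$, so the hypothesis is seen to be essentially sharp rather than merely sufficient. The reduction to the coprime case is identical in both proofs, and your closing observation that $n>2m$ is irrelevant to irreducibility (it only enforces suitability, $m\le n/2$) is consistent with the paper, whose proof never uses that inequality. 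One small gap to close in the write-up: you phrase the equal-spacing argument only for $\eta$ a \emph{primitive} sixth root of unity, using $-1=\eta^3$. The remaining sixth roots need a word: $\eta=\pm1$ is excluded because a sum of three odd integers cannot vanish, and for $\eta$ a primitive cube root one should rewrite all three terms as powers of a fixed primitive sixth root $\zeta$ (so $\omega=\zeta^2$, $-1=\zeta^3$); the exponents modulo $3$ become $2n_1$, $2m_1$, $0$, and since $2$ is invertible modulo $3$ the same conclusion follows. With that sentence added, the argument is complete.
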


\begin{proof}
Let us  transform, like in \cite{Lju1960}, $E(X) = X^n+ \beta X^m + \delta$  in quadrinomial:
\[
\left(X^n+ \beta X^m + \delta\right)\left(X^n - \delta\right) = X^{2n}+ \beta X^{n+m} - \beta\delta X^m -1 = F(X).
\]

Theorem 2  of  \cite{Mil1985b} states that if  $F(X) = A(X)E(X)$, where every root of $A(X)$ and no root of $E(X)$ is  a root of unity, then $E(X)$ is irreducible except if there exists $r$ such that:
\begin{itemize}
\item $(2n, n+m, m)  = (8r,7r,r)$ and $(\beta,\delta)= (1,-1)$ or $(-1,-1)$,
\item or $(2n, n+m, m)   = (8r,4r,2r)$  and $(\beta,\delta)= (1,-1)$,
\item or $(2n, n+m, m)   = (8r,6r,4r)$ and $(\beta,\delta)= (-1,-1)$.
\end{itemize}

It is easy to check that there is no integer $r$ which satisfies any of these 3 constraints, hence we only have to verify that no root of $E(X)$ is a root of unity. First notice that, because $n=d n_1$ and $m=d m_1$ with $\gcd(n_1,m_1) =1$, if $\lambda$ is a root of $E(X)$, then $\lambda^d$ is root of $X^{n_1}+ \beta X^{m_1}+\delta$. Hence, if the roots of $X^{n_1}+ \beta X^{m_1}+\delta$ are not roots of unity, then no root of $E(X)=X^n+ \beta X^m + \delta$ is a root of unity.

Let us assume that $\lambda$ is a root of $X^{n_1}+ \beta X^{m_1}+\delta$, which is also a root of unity.
Then there exit $t>1$ and $k$ with $\gcd(k,t)=1$, such that:
\[ 
\lambda = e^{\frac{2ik\pi}{t}} = \cos{\frac{2k\pi}{t}} + i \sin{\frac{2k\pi}{t}}. 
\]

 Assume that $\beta =1$. Then
   \[
   \left\{\begin{array}{ l}
      \cos(\frac{2n_1k\pi}{t}) + \cos(\frac{2m_1k\pi}{t})=  2 \cos(\frac{k\pi(n_1+m_1)}{t}) \cos(\frac{k\pi(n_1-m_1)}{t})      = - \delta    \\
      \sin(\frac{2n_1k\pi}{t}) + \sin(\frac{2m_1k\pi}{t})  =   2 \sin(\frac{k\pi(n_1+m_1)}{t}) \cos(\frac{k\pi(n_1-m_1)}{t})    =0.
\end{array}\right.
\]

Last equality implies that  $\sin(\frac{k\pi(n_1+m_1)}{t})  =0$ or $\cos(\frac{k\pi(n_1-m_1)}{t}) =0$.
Since $\delta \neq 0$,  the first equation implies that $\cos(\frac{k\pi(n_1-m_1)}{t}) \neq 0$, hence $\frac{k(n_1+m_1)}{t}$ is an integer. Since  $\gcd(k,t) =1$, $t \mid (n_1+m_1)$.
 This last result implies that the first equation can be reduced to 
 \[
  \cos\left(\frac{k\pi(n_1-m_1)}{t}\right) =\pm \frac{1}{2}
  \]
  because $\delta = \pm 1$.
 
 It means that 
\[
 \frac{k\pi(n_1-m_1)}{t}= j \frac{\pi}{3},  \; \;\; j=1,2,4,5  \pmod{6}.
 \]
Hence, $ t \mid 3 (n_1 - m_1$), since $\gcd{(k,t)} =1$.
  
Assume that $\beta =-1$. The system becomes:
   \[
   \left\{\begin{array}{ l}
      \cos(\frac{2n_1k\pi}{t}) - \cos(\frac{2m_1k\pi}{t})=  - 2 \sin(\frac{k\pi(n_1+m_1)}{t}) \sin(\frac{k\pi(n_1-m_1)}{t})      = - \delta    \\
      \sin(\frac{2n_1k\pi}{t}) - \sin(\frac{2m_1k\pi}{t})  =   2 \cos(\frac{k\pi(n_1+m_1)}{t}) \sin(\frac{k\pi(n_1-m_1)}{t})    =0.  
\end{array}\right.
\]

The first equation implies that $\sin(\frac{k\pi(n_1-m_1)}{t})  \neq 0$, hence the second equation gives   $\frac{k\pi(n_1+m_1)}{t} = j \frac{\pi}{2}$ for $j$ odd, which implies $t \mid 2(n_1+m_1)$.
Since $\frac{k\pi(n_1+m_1)}{t} = j \frac{\pi}{2}$ for $j$ odd, then the first equation can be reduced to $\sin(\frac{k\pi(n_1-m_1)}{t})  = \pm \frac{1}{2}$, which means that 

\[
\frac{k\pi(n_1-m_1)}{t} =j \frac{\pi}{6}, \;\;\; j = 1,5,7,11 \pmod{12}.
\]
Hence $ t \mid 6 (n_1 - m_1)$.\\

To sum up, if $\lambda$ is a $t^{th}$ root of unity of $X^{n_1}+ \beta X^{m_1}+\delta$ with $\delta,\beta \in \{-1,1  \}$, then:
\begin{itemize}
  \item[(a)] if $\beta = 1$, $t \mid (n_1 + m_1)$ and $ t \mid 3(n_1 - m_1)$,
  \item[(b)] if $\beta = -1$, $t \mid 2(n_1 + m_1)$ and $ t \mid 6(n_1 - m_1)$.
\end{itemize}

The case (a) implies that if $3 \nmid t$, then  $t \mid (n_1 - m_1)$, thus $t\mid 2n_1$ and $t \mid 2m_1$, as $\gcd(n_1,m_1) = 1$. We conclude that  $t=2$ and $\lambda =1$ or $-1$ is a root of $E(X)$ which is impossible.

The case (b) implies that  if $3 \nmid t$, then $t \mid 2 (n_1 - m_1)$, thus $t=4$, and  $\lambda =i$, $-i$, $1$ or $-1$ is a root of $E(X)$, which is impossible.

Hence, if one root of $E(X)$ is a root of unity, then $3$ divides $t$, and  $n_1+m_1 \equiv 0 \mod 3$.

In conclusion, if $\gcd(n_1,m_1)=1$  and $n_1+m_1 \not\equiv 0 \mod 3$, then $X^{n_1}+ \beta X^{m_1} + \delta$ and  $X^{n}+ \beta X^{m} + \delta$ are irreducible.
\end{proof}

\subsection{Case of irreducibility of binomials $X^n + c$, $c \in \mathbb{Z}$, $|c| \geqslant 2$, over $\mathbb{Z}$}

\begin{proposition}\label{section_irr_binome}
Let $|c| =\prod_{j=1}^{k} p_j^{m_j}$ with $p_j$ pairwise distinct prime numbers, and $m_j$ positive integers.
If $gcd(m_1,\dots,m_k,n) = 1$,  then the polynomial $X^n + c$, with $c \in \mathbb{Z}$, $|c| \geqslant 2$,  is irreducible over $\mathbb{Z}[X]$.

We call this class of suitable polynomials ${\tt ClassBinomial}$, and, for $n$ and $c$ satisfying this proposition, ${\tt ClassBinomial(n,c)}$ is the singleton $\{X^n + c\}$.
\end{proposition}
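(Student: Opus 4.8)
The plan is to pass to irreducibility over $\mathbb{Q}$ and invoke a classical criterion for binomials. Since $X^n+c$ is monic it is primitive, so by Gauss's lemma it is irreducible over $\mathbb{Z}[X]$ if and only if it is irreducible over $\mathbb{Q}$. Writing $X^n + c = X^n - a$ with $a = -c$, I would apply the Capelli--Vahlen criterion: $X^n - a$ is irreducible over a field $K$ (here $K = \mathbb{Q}$) if and only if (i) $a$ is not a $q$-th power in $K$ for any prime $q \mid n$, and (ii) if $4 \mid n$, then $a \notin -4K^4$. It then suffices to check that the hypothesis $\gcd(m_1,\dots,m_k,n)=1$ forces both (i) and (ii), using the factorisation $|a| = \prod_{j} p_j^{m_j}$ and $p$-adic valuations.

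For (i), fix a prime $q \mid n$. If $q$ is odd, then $a \in \mathbb{Q}^q$ is equivalent to $|a| \in \mathbb{Q}^q$, since an odd power preserves sign; by unique factorisation this forces $q \mid m_j$ for every $j$, and together with $q \mid n$ it gives $q \mid \gcd(m_1,\dots,m_k,n) = 1$, a contradiction. If $q = 2$, then either $a < 0$, so $a$ is not a rational square outright, or $a = |a| = \prod_j p_j^{m_j}$ is a square only when all $m_j$ are even; but $2 \mid n$ and $\gcd(m_1,\dots,m_k,n)=1$ guarantee that at least one $m_j$ is odd, so in every case $a \notin \mathbb{Q}^2$.

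For (ii), suppose $4 \mid n$ and, for contradiction, that $a = -4b^4$ with $b \in \mathbb{Q}$. Then $-a = 4b^4 \geq 0$ forces $a < 0$, hence $|a| = 2^2 b^4$. Comparing the exponent of each prime on both sides (writing $b = u/v$ in lowest terms) shows that every $m_j$ is even: the exponent of $p_j$ is $\equiv 2 \pmod 4$ when $p_j = 2$ and $\equiv 0 \pmod 4$ otherwise. This contradicts the presence of an odd exponent guaranteed by $2 \mid n$ together with $\gcd(m_1,\dots,m_k,n)=1$. Hence (ii) holds as well, and the two conditions yield irreducibility of $X^n + c$ over $\mathbb{Q}$, and therefore over $\mathbb{Z}[X]$.

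I expect the delicate point to be condition (ii): the $4 \mid n$ case of the Capelli--Vahlen criterion is the one that genuinely requires the quartic valuation bookkeeping above, whereas (i) is a fairly direct consequence of the gcd hypothesis. A secondary care point is the sign bookkeeping around $a = -c$, which must be tracked so that the parity argument (namely that $2 \mid n$ and the gcd condition force some $m_j$ to be odd) is applied correctly in each branch.
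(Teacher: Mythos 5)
Your proof is correct, but it takes a genuinely different route from the paper. The paper disposes of this proposition in one line by citing Corollary 1.2 of Bonciocat (the same valuation-theoretic generalization of the Eisenstein--Dumas criterion quoted earlier in the text), specialized to the constant term $a_0 = c$. You instead reduce to $\mathbb{Q}$ via Gauss's lemma and invoke the Capelli--Vahlen criterion for binomials $X^n - a$, then verify the two conditions ($a \notin \mathbb{Q}^q$ for each prime $q \mid n$, and $a \notin -4\mathbb{Q}^4$ when $4 \mid n$) by valuation and sign bookkeeping; your handling of both is sound, including the observation that $b$ must be an integer in the $-4b^4$ case and that $2 \mid n$ together with the gcd hypothesis forces some exponent $m_j$ to be odd. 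Your approach buys self-containedness and a sharper tool: Capelli is an equivalence, so it pins down exactly when a binomial is reducible, whereas Bonciocat's corollary is only a sufficient criterion. It also quietly covers a corner the paper's citation does not: as stated in the paper, Bonciocat's Corollary 1.2 requires $t \geq 2$ distinct primes, so the case $c = \pm \mu^m$ with a single prime $\mu$ would need the Dumas criterion instead, while your argument treats $k = 1$ and $k \geq 2$ uniformly. The trade-off is that you must correctly quote and trust the $4 \mid n$ clause of Capelli, which is exactly the delicate point you flag.
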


\begin{proof}
It is a direct application of  Corollary 1.2 of a paper due to 
Nicolae Ciprian Bonciocat \cite{Bon2015}.
\end{proof}

\subsection{Polynomials with bounds on the modules of their complex roots}

The two propositions given in this section are inspired by the Perron irreducibility criterium, which is proved thanks to Rouch\'e's theorem \cite{Bon2010}.

\begin{proposition}
For a fixed $n \geqslant 2$ and  a prime $\mu$,  let  $P(X)= X^n + \displaystyle\sum_{i = 1}^{n/2} \varepsilon_i X^i \pm \mu $ with $\varepsilon_i \in \{ -1, 0, 1\}$.

If $\displaystyle \mu > 1 + \sum_{i=1}^{n/2} \left|\varepsilon_i \right|  $,  then the polynomial $P(X)$  is irreducible over $\mathbb{Z}[X]$.\\

They represent the fifth class of  suitable  reduction polynomials. We call this class ${\tt ClassPrimeCst}$, and  ${\tt ClassPrimeCst(n,\mu)}$ represents all the polynomials of this class with $n \geqslant 2$ and $\mu$ a prime number.
\end{proposition}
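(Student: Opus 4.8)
The plan is to adapt the Perron-type localization of roots, proved via Rouch\'e's theorem, to the present situation where the \emph{dominant} coefficient sits on the constant term $\pm\mu$ rather than on $X^{n-1}$. First I would split $P$ on the unit circle as $P(X) = f(X) + h(X)$, with $f(X) = \pm\mu$ the constant part and $h(X) = X^n + \sum_{i=1}^{n/2}\varepsilon_i X^i$ the remaining non-constant part. On $\lvert X\rvert = 1$ one has $\lvert h(X)\rvert \le 1 + \sum_{i=1}^{n/2}\lvert\varepsilon_i\rvert$, whereas $\lvert f(X)\rvert = \mu$; the hypothesis $\mu > 1 + \sum_{i=1}^{n/2}\lvert\varepsilon_i\rvert$ then yields $\lvert h(X)\rvert < \lvert f(X)\rvert$ on the whole circle.

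By Rouch\'e's theorem, $P = f + h$ and the constant $f$ have the same number of zeros inside $\lvert X\rvert < 1$, namely none. The same strict inequality gives $\lvert P(X)\rvert \ge \mu - (1 + \sum_{i=1}^{n/2}\lvert\varepsilon_i\rvert) > 0$ on $\lvert X\rvert = 1$, so $P$ has no root of modulus $1$ either. Hence every complex root $\alpha$ of $P$ satisfies $\lvert\alpha\rvert > 1$.

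For the irreducibility itself I would argue by contradiction. Since $P$ is monic, Gauss's lemma lets me assume that any non-trivial factorization over $\mathbb{Q}$ arises from monic $Q, R \in \mathbb{Z}[X]$ of positive degree with $P = QR$. Writing $Q$ and $R$ as products of monic linear factors over $\mathbb{C}$, the constant term of $Q$ satisfies $\lvert Q(0)\rvert = \prod_{\alpha}\lvert\alpha\rvert$ over the roots $\alpha$ of $Q$, all of which are roots of $P$ and thus of modulus $> 1$; since $Q$ is non-constant this forces $\lvert Q(0)\rvert > 1$, and being an integer, $\lvert Q(0)\rvert \ge 2$. Likewise $\lvert R(0)\rvert \ge 2$. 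But $Q(0)R(0) = P(0) = \pm\mu$, so $\lvert Q(0)\rvert\,\lvert R(0)\rvert = \mu$ with $\mu$ prime, which forces one of the two factors to equal $1$, contradicting $\lvert Q(0)\rvert, \lvert R(0)\rvert \ge 2$.

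The main obstacle is not the Rouch\'e estimate, which is routine, but making the two ingredients mesh: the root-localization step only guarantees $\lvert\alpha\rvert > 1$, and it is the \emph{primality} of $\mu$ that converts this into a contradiction through the forced factorization $\mu = 1\cdot\mu$ of the nonzero integer $P(0)$. I would also remark that the restriction $i \le n/2$ plays no role in the irreducibility argument, since the only bound used is on $\lvert h\rvert$ over all present terms; it is imposed solely so that $P$ qualifies as a suitable PMNS reduction polynomial.
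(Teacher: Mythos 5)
Your proof is correct and takes essentially the same route as the paper: both apply Rouch\'e's theorem with the constant term $\pm\mu$ as the dominant function to push all roots outside the closed unit disk, and both then use the primality of $\mu = \lvert P(0)\rvert$ together with the product formula for the constant term of a monic factor to derive a contradiction. The only cosmetic difference is that the paper works on a circle of radius $\delta>1$ while you work on the unit circle and separately rule out roots of modulus exactly $1$.
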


\begin{proof}
Since  $\displaystyle \mu > 1 + \sum_{i=1}^{n/2} \left|\varepsilon_i \right|  $,    there exists $\delta >1$ such that  $\displaystyle \mu > \delta^n \left(1 +  \sum_{i=1}^{n/2} \left|\varepsilon_i \right| \right)$.

Let us consider  $\mathcal{C} = \{z \in \mathbb{C}\;  / \;  |z| = \delta  \}$, $P(X)= X^n + \displaystyle\sum_{i = 1}^{n/2} \varepsilon_i X^i + \varepsilon \mu $  ($\varepsilon_i \in \{ -1, 0, 1\},\; \varepsilon \in \{ -1, 1\} $),  $F(X) = \varepsilon  \mu$ and $G(X) = P(X)-F(X)$.

For any $z \in \mathcal{C}$, we have $\displaystyle |G(z)| \leqslant  \delta^n \left(1 + \sum_{i=1}^{n/2} \left|\varepsilon_i \right|  \right) < \mu = |F(z)|$.

Since $F(z)$ and $G(z)$ are holomorphic functions, Rouch\'e's theorem states that $F(z)$ and $P(z)=F(z)+G(z)$ have the same number of roots inside $\mathcal{C}$. Hence $P(z)$ has no root inside $\mathcal{C}$ since $F(z)$  is constant. In other words, any root $\alpha$ of $P(z)$ satisfies $|\alpha|\geqslant \delta > 1$.

Assume now that $P(X)$ is reducible over $\mathbb{Z}\left[ X\right]$. Hence,  $P(X) = H(X) Q(X)$ with $H(X)$ and $Q(X)$ two monic polynomials.
Since  $|P(0) |= \mu$ (a prime number), we can assume that $|H(0) |= \mu$ and $|Q(0) |= 1$.
Now  $\prod \left|z_i\right| =1$, where $z_i$ are all the roots of  $Q(X)$.  But the roots of $Q(X)$ are also roots of $P(X)$ which is  not possible since any root $\alpha$ of $P(X)$ is such that $|\alpha | \geqslant \delta > 1$.
Hence, $P(X)$ is irreducible over  $\mathbb{Z}\left[ X\right]$.
\end{proof}

\begin{remark} If  $\mu > n/2+1$, then ${\tt ClassPrimeCst(n,\mu)}$ contains $3^{n/2}$ elements (for each $\varepsilon_i $ three possibilities), else
$\displaystyle\sum_{i=0}^{\mu-2} \left( \begin{array}{c} n/2 \\ i \end{array} \right) 2^{i+1}$ elements.
\end{remark}

\begin{proposition}
For a fixed $n \geqslant 2$,   let  $P(X)= X^n + \displaystyle\sum_{i = 2}^{n/2} \varepsilon_i X^i  + a_1 X  \pm 1 $ with $\varepsilon_i \in \{ -1, 0, 1\}$ and $a_1 \in\mathbb{Z}^*$.

If $\displaystyle \left| a_1 \right|  > 2 + \sum_{i=2}^{n/2} \left| \varepsilon_i \right|$,  then the polynomial $P(X)$  is irreducible over $\mathbb{Z}[X]$.\\

We call this class ${\tt ClassPerron}$, and  ${\tt ClassPerron(n,a_1)}$ represents all the polynomials of this class with $n \geqslant 2$, $a_1 \in \mathbb{Z^*}$.
\end{proposition}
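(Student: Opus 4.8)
The plan is to mirror the Rouch\'e-based argument of the previous proposition, but now to split $P$ so that the \emph{linear} term $a_1X$ plays the role of the dominant term, and to count roots relative to the \emph{unit} circle rather than a circle of radius $\delta$. First I would set $\mathcal{C} = \{z \in \mathbb{C} \; / \; |z| = 1\}$ and write $P(X) = F(X) + G(X)$ with $F(X) = a_1 X$ and $G(X) = X^n + \sum_{i=2}^{n/2}\varepsilon_i X^i \pm 1$. On $\mathcal{C}$ one has $|F(z)| = |a_1|$, while the triangle inequality gives $|G(z)| \leq 2 + \sum_{i=2}^{n/2}|\varepsilon_i|$, the two $1$'s coming from the leading term $X^n$ and the constant $\pm 1$. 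The hypothesis $|a_1| > 2 + \sum_{i=2}^{n/2}|\varepsilon_i|$ is exactly what forces $|F(z)| > |G(z)|$ on $\mathcal{C}$; in particular $P$ has no zero on $\mathcal{C}$.

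Next I would invoke Rouch\'e's theorem: since $F$ and $G$ are holomorphic and $|F| > |G|$ on $\mathcal{C}$, the functions $F$ and $P = F + G$ have the same number of roots inside $\mathcal{C}$. As $F(X) = a_1 X$ has exactly one root (at the origin) in the open unit disk, $P$ has exactly one root $\alpha$ with $|\alpha| < 1$; the remaining $n-1$ roots then satisfy $|z| > 1$, none lying on $\mathcal{C}$ by the strict inequality above. Note also that $P(0) = \pm 1 \neq 0$, so $0$ is not a root of $P$.

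Then I would run the factorization argument. Assume $P$ is reducible; by Gauss's lemma we may write $P = H(X)Q(X)$ with $H,Q$ monic, non-constant, in $\mathbb{Z}[X]$. The unique root of modulus $<1$ belongs to exactly one factor, say $H$, so every root of $Q$ lies strictly outside the unit disk. Since $Q$ is non-constant and $|Q(0)|$ equals the product of the moduli of the roots of $Q$, we get $|Q(0)| > 1$. On the other hand, $|H(0)|\,|Q(0)| = |P(0)| = 1$ with both $H(0)$ and $Q(0)$ nonzero integers, which forces $|Q(0)| = 1$ --- a contradiction. Hence $P$ is irreducible over $\mathbb{Z}[X]$.

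The argument is largely routine once the right split is chosen; the only points requiring care are the strict inequality on $\mathcal{C}$ (which simultaneously rules out boundary roots and pins down the exact root count), and the passage to monic integer factors via Gauss's lemma so that the constant-term identity $|H(0)|\,|Q(0)| = 1$ can be exploited. The one genuinely decisive step is recognizing that here it is the linear coefficient $a_1$, and not the constant term, that must dominate in the Rouch\'e estimate.
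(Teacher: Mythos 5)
Your proof is correct and follows essentially the same route as the paper: split off $F(X)=a_1X$, apply Rouch\'e to conclude $P$ has exactly one root of small modulus, and derive a contradiction from the constant terms of a putative monic integer factorization. The only cosmetic differences are that you work on the unit circle rather than a circle of radius $\delta>1$, and you phrase the final contradiction via the product of the moduli of the roots of the factor $Q$ instead of counting two roots inside the circle; both are the same underlying argument.
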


\begin{proof}
The proof is similar to the previous one.
From $|a_1| > 2 + \sum_{i=2}^{n/2} |\varepsilon_i|$, we can deduce that there exists $\delta > 1$ such that $|a_1| > \delta^n \left(    2 + \sum_{i=2}^{n/2} |\varepsilon_i| \right)$. 
Then, from  Rouch\'e's theorem, $P(z)$ and $F(z) = a_1 z$ have the same number of roots inside $\mathcal{C} = \{z \in \mathbb{C}\;  / \;  |z| = \delta  \}$.
Hence $P(z)$ has only one root whose module is strictly less than $\delta$.

Now, if $P(X)$ is reducible over  $\mathbb{Z}\left[ X\right]$, then  $P(X) = H(X) Q(X)$, with $H(X)$ and $Q(X)$ two monic polynomials and  $\left| H(0) \right| = \left| G(0) \right| = 1$.
Hence, $H(z)$ has at least one root $z_H$ such that $|z_H| \leqslant 1$ and  $G(z)$ has at least one root $z_G$ such that $|z_G| \leqslant 1$. It means that $P(z)$ has at least two roots inside $\mathcal{C}$, which is not possible. 
Hence, $P(X)$ is irreducible over  $\mathbb{Z}\left[ X\right]$.
\end{proof}

\begin{remark} If  $|a_1|> n/2 + 1$, then ${\tt ClassPerron(n,a_1)}$ contains $2 \times 3^{n/2-1}$ elements, else
$\displaystyle\sum_{i=0}^{|a_1|-3} \left( \begin{array}{c} n/2 -1\\ i \end{array} \right) 2^{i+1}$ elements.
\end{remark}


\section{Number of PMNS in function of their reduction polynomial  in $\mathbb{Z}/p\mathbb{Z}$ with $p$ prime\label{Sec_Roots}}
\label{sec:number}

In this section, we determine for each class,  the reduction polynomials which have one or more roots $\gamma$ in $\mathbb{Z}/p\mathbb{Z}$. 
The number of  roots in $\mathbb{Z}/p\mathbb{Z}$ defines the number of  possible PMNS. 

As we have to build, for a given prime $p$ and a given  number of digits $n$,  many  PMNS with an efficient arithmetic,  finding relevant reduction polynomials is crucial. 
Now that we have described classes of irreducible polynomials with specific reduction properties, we need to identify for a prime $p$ which polynomials have at least one root in $\mathbb{Z}/p\mathbb{Z}$, and if possible, how many. We begin with a presentation of two special cases where the reduction polynomials are cyclotomics or binomials, then we propose a method in the general case that works for any irreducible integer polynomial.

\subsection{Number of PMNS with a cyclotomic reduction polynomial}

\begin{proposition}\label{prop_roots_cyclo}
Let $p$ be a prime number, $p >2$, and an integer $m\geqslant 3$ such that $m \mid (p-1)$. Then the cyclotomic polynomial  $\Phi_m(X) $ satisfies $\Phi_m(X) \mid (X^{p-1}-1)$ and $\Phi_m(X) $ has $\varphi(m)$  roots over $\mathbb{Z}/p\mathbb{Z}$. 
\end{proposition}

\begin{proof} We have
$\displaystyle (X^{p-1}-1) = \prod_{\xi_i \in (\mathbb{Z}/p\mathbb{Z})^*}^{} (X - \xi_i ) = \prod_{d | (p-1)}^{} \Phi_d(X)$.

Thus $\displaystyle \Phi_m(X) \mid (X^{p-1}-1)$, and  $\Phi_m(X) $ has $\varphi(m)$  (its degree) roots over $\mathbb{Z}/p\mathbb{Z}$.
\end{proof}

We apply  \cref{prop_roots_cyclo} to the different cyclotomic polynomials of  the class ${\tt ClassCyclo(n)}$ introduced in  \cref{CycloProp}.

\begin{corollary}
Let $p$ be a prime number, $n \geqslant 2$ such that $n = 2^i 3^j$, with $i, j \in \mathbb{N}$.  

If either one of these conditions folds, i.e.;
\begin{itemize}
\item[$a)$]  $i> 0$, $j = 0$,   $(2\,n)$ divides $(p-1)$, and $E(X) = \Phi_{2n}(X) = X^{n} +1$;
\item[$b)$]  $i = 1$, $j \geqslant 0$,   $(3\,n\,/\,2)$ divides $(p-1)$, and $E(X) = \Phi_{\frac{3n}{2}}(X) = X^{n} + X^{\frac{n}{2}} + 1$;
\item[$c)$]  $i \geqslant 1$, $j \geqslant 0$,   $(3\,n)$ divides $(p-1)$, and $E(X) = \Phi_{3n}(X) = X^{n} - X^{\frac{n}{2}} + 1$,
\end{itemize}
then,  there exist $n$ PMNS $(p,n,\gamma_i,\rho)_{E(X)}$, with $\gamma_i$ one of the $n$ distinct roots modulo $p$ of ${E(X)}$.

\end{corollary}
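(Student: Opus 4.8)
The plan is to treat the statement as a direct application of Proposition \ref{prop_roots_cyclo} to each of the three families of suitable cyclotomic polynomials isolated in Proposition \ref{CycloProp}. For each bullet, the first task is to recognise the displayed $E(X)$ as a single cyclotomic polynomial $\Phi_m(X)$ and to read off the correct index $m$. Using the explicit forms from Proposition \ref{CycloProp}, I would identify $m = 2n$ in the first case (since $\Phi_{2n}(X) = X^n + 1$ when $n = 2^i$), $m = 3n/2$ in the second (since $\Phi_{3n/2}(X) = X^n + X^{n/2} + 1$ when $n = 2\cdot 3^j$), and $m = 3n$ in the third (since $\Phi_{3n}(X) = X^n - X^{n/2} + 1$ when $n = 2^i 3^j$). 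In each case a short computation with Euler's totient confirms $\varphi(m) = n$: $\varphi(2^{i+1}) = 2^i = n$, $\varphi(3^{j+1}) = 2\cdot 3^j = n$, and $\varphi(2^i 3^{j+1}) = 2^i 3^j = n$.

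Next I would observe that the divisibility hypothesis in each bullet is exactly the condition $m \mid (p-1)$ required by Proposition \ref{prop_roots_cyclo}, and that $m \geq 3$ holds throughout (indeed $m = 2n \geq 4$, or $m = 3n/2 \geq 3$, or $m = 3n \geq 6$, using $n \geq 2$). Proposition \ref{prop_roots_cyclo} then yields that $\Phi_m(X)$ has $\varphi(m) = n$ roots in $\mathbb{Z}/p\mathbb{Z}$. Because $\deg E = \deg \Phi_m = n$ and $\Phi_m(X) \mid (X^{p-1} - 1)$, whose roots are the $p-1$ distinct nonzero residues of the field $\mathbb{Z}/p\mathbb{Z}$, these $n$ roots are necessarily distinct; call them $\gamma_1, \dots, \gamma_n$.

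Finally, each $\gamma_i$ is a root modulo $p$ of the monic, degree-$n$, irreducible polynomial $E(X) = \Phi_m(X)$, so the hypotheses of Theorem \ref{theorem_PMNS_generalise} are met. Taking any $\rho$ larger than the covering radius of the lattice $\mathfrak{L}$ attached to $(p, n, \gamma_i)$, that theorem certifies that $(p, n, \gamma_i, \rho)_{E(X)}$ is a PMNS. Letting $i$ range over the $n$ distinct roots produces the claimed $n$ systems.

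The only delicate point is the bookkeeping of indices: one must match each displayed binomial or trinomial to the right cyclotomic index $m$ and verify $\varphi(m) = n$ and $m \geq 3$ separately in the three regimes, paying attention to the boundary values (for instance $n = 2$ in the second case, where $m = 3$ just saturates the hypothesis $m \geq 3$ of Proposition \ref{prop_roots_cyclo}). Beyond this indexing, the argument is a routine chain of the earlier results, with no genuine analytic or combinatorial obstacle.
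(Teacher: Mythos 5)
Your proposal is correct and follows exactly the route the paper intends: the corollary is presented there as a direct application of Proposition \ref{prop_roots_cyclo} to the three families of suitable cyclotomic polynomials from Proposition \ref{CycloProp}, which is precisely your chain of identifying $m$, checking $\varphi(m)=n$ and $m\mid(p-1)$, and invoking Theorem \ref{theorem_PMNS_generalise} for each root. Your extra care about the distinctness of the roots (via $\Phi_m(X)\mid X^{p-1}-1$) and the boundary checks on $m\geq 3$ only make explicit what the paper leaves implicit.
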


\begin{example}  
 Construction  of PMNS from a cyclotomic reduction polynomial for $p = 2^{256}\cdot 3^{157}\cdot 115+1$ coded on $512$ bits.
 \begin{itemize} 
 \item $E(X) = X^8 + 1$: from its eight roots, the best $\rho$   is obtained with  \cref{CorIrreduc}  and   \cref{CorIrreducb}., and it is   $66$ bits number. 
 \item $E(X) = X^6+X^3 + 1$: from its six roots, the best $\rho$   is obtained twice with  LLL, else with  \cref{CorIrreduc}  and   \cref{CorIrreducb}, and it is $87$ bits number.
 \item $E(X) = X^6 -X^3 + 1$: from  its six roots, the best $\rho$  is obtained with  \cref{CorIrreduc}  and   \cref{CorIrreducb}, and it is $87$ bits number.
  \end{itemize} 
 
\end{example}

\subsection{Number of PMNS with reduction binomials $X^n + c$, $c \in \mathbb{Z}$, $|c| \geqslant 2$}

\begin{proposition}\label{proposition_PMNS_binomials}
Let $E(X) = X^n + c$  be an element of  ${\tt ClassBinomial(n,c)}$ (\cref{section_irr_binome}).
Let  $g$ be a generator of $(\mathbb{Z}/p\mathbb{Z})^{\times}$ and $y$ such that $g^y \equiv -c \mod p$.

If $\gcd(n, p-1)$ divides $ y$,
then $E(X) = X^n + c$ has  $\gcd(n, p-1)$ different roots.
\end{proposition}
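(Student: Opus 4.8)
The plan is to count the roots of $E(X) = X^n + c$ modulo $p$ by reducing the problem to a discrete logarithm equation in the cyclic group $(\mathbb{Z}/p\mathbb{Z})^{\times}$. First I would fix a generator $g$ of $(\mathbb{Z}/p\mathbb{Z})^{\times}$, so that every nonzero element modulo $p$ is a unique power of $g$. A root $\gamma$ of $E(X)$ satisfies $\gamma^n \equiv -c \pmod{p}$; writing $\gamma \equiv g^x$ and $-c \equiv g^y$ (which uses that $-c$ is invertible, guaranteed since $|c| \geq 2$ and $\gcd(m_1,\dots,m_k,n)=1$ forces $p \nmid c$), the equation becomes $g^{nx} \equiv g^y \pmod{p}$, i.e. the linear congruence $nx \equiv y \pmod{p-1}$ in the unknown exponent $x$.

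The heart of the argument is then the standard theory of linear congruences. The congruence $nx \equiv y \pmod{p-1}$ has a solution in $x$ \emph{if and only if} $d := \gcd(n, p-1)$ divides $y$, and when this divisibility holds it has exactly $d$ incongruent solutions modulo $p-1$. I would cite or reproduce this elementary fact briefly: solutions exist precisely when $d \mid y$, and the solution set forms a single coset modulo $(p-1)/d$, yielding $d$ residues modulo $p-1$. Since the map $x \mapsto g^x$ is a bijection between residues modulo $p-1$ and elements of $(\mathbb{Z}/p\mathbb{Z})^{\times}$, each such exponent $x$ corresponds to exactly one distinct root $\gamma = g^x$ in $\mathbb{Z}/p\mathbb{Z}$. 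Under the stated hypothesis that $\gcd(n,p-1)$ divides $y$, we therefore obtain exactly $d = \gcd(n,p-1)$ distinct roots, which is the claim.

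The main obstacle, such as it is, is mostly bookkeeping rather than conceptual difficulty: one must be careful that the correspondence between roots $\gamma$ and exponents $x$ is genuinely a bijection, so that distinct admissible exponents give distinct roots and no root is counted twice. This follows immediately from $g$ being a generator, but it is the step where a sloppy argument could miscount. I would also make explicit at the outset that $\gamma = 0$ is never a root (again because $p \nmid c$), so that restricting attention to $(\mathbb{Z}/p\mathbb{Z})^{\times}$ loses nothing. With these two remarks in place, the proof reduces entirely to the count of solutions of the linear congruence, and no further machinery from the irreducibility results of Proposition \ref{section_irr_binome} is needed beyond ensuring $-c$ is a unit.
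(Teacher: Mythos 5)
Your proposal is correct and follows essentially the same route as the paper: write $-c \equiv g^y$ and a root as $g^x$, reduce to the linear congruence $nx \equiv y \pmod{p-1}$, and invoke the classical count of $\gcd(n,p-1)$ solutions when the divisibility holds. Your extra care about the bijection $x \mapsto g^x$ and about $0$ not being a root is welcome (though note the hypothesis $\gcd(m_1,\dots,m_k,n)=1$ does not by itself force $p \nmid c$; that really comes from $|c|$ being small relative to $p$ in the PMNS setting), but the substance matches the paper's argument.
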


\begin{proof}
Let $X_0$ be a solution of $E(X) =0 \pmod{p}$. Then there exists $x_0$ such that $X_0 \equiv g^{x_0} \pmod{p}$ 
and $ g^{n\cdot x_0} \equiv -c  \equiv g^y \pmod{p}$.
In other words, $n \cdot x_0 \equiv y \pmod{p-1}$.

Now, let $\delta = \gcd(n,p-1)$. A classical result in modular arithmetic states that this linear equation admits $\delta$ solutions if and only if $\delta$ divides $y$, each solution being equal to $x_0+jp'$, where $j\in \left\{0, \dots,\delta -1  \right\}$ and $(p-1) = \delta p'$.
\end{proof}

\begin{remark} If $\gcd(n, p-1)= 1$, then $E(X) = X^n + c$ is guaranted to have  one  root.
\end{remark}


\begin{example}
For  $p=40993$,  $5$ is a generator of $\left( \mathbb{Z}/40993\mathbb{Z} \right)^*$.
Let $n=4$ and $E(X) = X^4 +c$. For $c=2$, we can find $y= 33788$ such that  $-c =5^y \mod p$. Since $\gcd(1,n)=1$, from  \cref{proposition_PMNS_binomials}, $E(X)$ is irreducible. Moreover, $\gcd(n,p-1)=4$ divides $y$, hence four PMNS can be generated from $E(X)$.
For $c'=-2$, we can find $y'=13292$ and $\gcd(n,p-1)=4$ divides $y'$, giving once again four possible PMNS.
\end{example}

\subsection{Number of PMNS in the general case}

In this part, we propose a general method to count the minimum number of PMNS we can reach from a prime $p$ and any irreducible polynomial $E(X)$ in $\mathbb{Z}[X]$.

\begin{proposition}\label{number_of_PMNS}
Let $p$ be a prime number, $n > 2$, $E(X)$ a polynomial of degree $n$ and irreducible in $\mathbb{Z}[X]$, and $D(X) = \gcd(X^p - X, E(X)) \mod p$.

There exist $\deg(D(X))$ Polynomial Modular Number Systems $(p, n, \gamma_i, \rho)_{E(X)}$.

\end{proposition}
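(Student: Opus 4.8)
The plan is to identify the admissible radices $\gamma_i$ with the roots of $E(X)$ in $\mathbb{Z}/p\mathbb{Z}$, and then to show that their number is exactly $\deg(D(X))$. First I would invoke Theorem \ref{theorem_PMNS_generalise}: for every root $\gamma$ of $E(X)$ in $\mathbb{Z}/p\mathbb{Z}$ we have the lattice $\mathfrak{L}$ of Equation (\ref{matrixA}), and choosing $\rho$ larger than its covering radius yields a PMNS $(p,n,\gamma,\rho)_E$ (when $E$ is irreducible, Corollary \ref{CorIrreduc} together with Theorem \ref{RhoBoundTheorem} even furnishes an explicit finite $\rho$). Distinct roots $\gamma_i$ give distinct quadruples, hence distinct systems; since the fourth item of the Remark after the definition allows us to enlarge $\rho$ freely, a single $\rho$ (the maximum of the per-root bounds) works simultaneously for all roots. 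Thus the number of PMNS equals the number of distinct roots of $E(X)$ in $\mathbb{Z}/p\mathbb{Z}$, and it remains to count these.

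Next I would use Fermat's little theorem to write, in $\mathbb{F}_p[X]$, the factorisation $X^p - X = \prod_{a \in \mathbb{Z}/p\mathbb{Z}}(X-a)$, so that the roots of $X^p - X$ in an algebraic closure $\overline{\mathbb{F}_p}$ are precisely the elements of $\mathbb{Z}/p\mathbb{Z}$. The roots of $D(X) = \gcd(X^p - X, E(X)) \bmod p$ are then exactly the common roots of the two polynomials, that is, the roots of $E(X)$ lying in $\mathbb{Z}/p\mathbb{Z}$. Hence $D(X)$ vanishes precisely at the admissible $\gamma_i$.

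The key point, and the step requiring care, is to pass from ``roots of $D$'' to ``$\deg D$'', i.e. to verify that $D(X)$ is squarefree so that its degree equals its number of distinct roots. This follows from the separability of $X^p - X$: its formal derivative is $-1$, a nonzero constant, so $\gcd(X^p - X,(X^p - X)') = 1$ and $X^p - X$ has no repeated factor over $\mathbb{F}_p$. Since $D(X)$ divides $X^p - X$, it inherits this separability; each of its roots is simple, and therefore $\deg(D(X))$ equals the number of distinct roots of $E(X)$ in $\mathbb{Z}/p\mathbb{Z}$. Combining this with the first paragraph gives exactly $\deg(D(X))$ systems $(p,n,\gamma_i,\rho)_E$.

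The main obstacle is precisely this squarefreeness bookkeeping: irreducibility of $E$ over $\mathbb{Z}$ does \emph{not} guarantee that $E(X) \bmod p$ is separable, so one cannot read off the root count from $E$ itself. The whole force of intersecting with $X^p - X$ is to reduce to a separable polynomial whose degree is an honest count of distinct roots in the prime field. The computational remarks preceding the statement (square-and-multiply for $X^p \bmod E(X) \bmod p$, then the Euclidean algorithm) bear only on how $D(X)$ is produced, not on the correctness of the count.
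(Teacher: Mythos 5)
Your proof is correct and follows the same route as the paper, which simply observes that the roots of $X^p - X \bmod p$ are exactly the elements of $\mathbb{Z}/p\mathbb{Z}$ and declares the rest trivial. You have merely filled in the details the paper omits --- in particular the separability of $X^p - X$, which is indeed the point needed to equate $\deg(D)$ with the number of distinct roots of $E$ in $\mathbb{Z}/p\mathbb{Z}$, since $E \bmod p$ itself need not be squarefree.
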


\begin{proof}
The proof is immediate considering, when $p$ is prime, that the roots of $X^p - X \mod p$ are the $p$ elements of  $\mathbb{Z}/p\mathbb{Z}$. 
\end{proof}

\begin{remark}
 \Cref{prop_roots_cyclo} can be considered as a corollary of   \cref{number_of_PMNS}.
\end{remark}

The computation of $ \gcd(X^p - X, E(X)) \mod p = \gcd(X^{p-1} - 1, E(X)) \mod p$ ($E(X)$ is irreducible in $\mathbb{Z}/p\mathbb{Z}$) can be done, in a reasonable time, in two steps:\begin{enumerate}
  \item  we compute $X^{p-1}\bmod E(X) \bmod p$ with a square and multiply exponentiation algorithm, and we compute $F(X) = X^{p-1} - 1  \bmod E(X) \bmod p,$
  \item  then, we compute $D(X)= \gcd(F(X), E(X)) \bmod p$ with polynomials of degrees lower than or equal to $n.$
\end{enumerate}
The first step represents   $O(\log_2(p))$ squares and additions of polynomials of degree lower than $n$ in  $\mathbb{Z}/p\mathbb{Z}[X]$, and the second step represents at most $n$ iterations of the Euclidean algorithm. \\

The roots can be found using the method of  Cantor-Zassenhaus\cite{CZ1981} for separating the roots of $D(X)=\gcd(X^p - X, E(X)) \bmod p$. 

As $X^p - X = \prod_{z\in \mathbb{Z}/p\mathbb{Z}} (X - z)$, then  $D(X)=\prod_{i=1}^{k} (X-e_i)$ with $k=\deg(D(X))$ and $e_i \in\mathbb{Z}/p\mathbb{Z}$ all distinct.

Due to the Chinese Remainder Theorem, any polynomial $A(X)$ of degree strictly lower than $k$, can be represented by its values modulo the $(X-e_i)$: 
\[
a_i = A(X) \bmod{(X-e_i)} \mbox{ in } \mathbb{Z}/p\mathbb{Z} \mbox{ for } i = 1,\dots,k\,.
\]
Let us consider a polynomial $A(x)$ such that $a_i\in\{0,1,-1\}$  and  $A(X) \neq 0,1,-1  \bmod D(X)$ (i.e., $a_i$ are not all equal).
We note $T=\{i, a_i = 1\}$ and $S=\{i, a_i = 0\}$. As $ A(X) \neq 0,1,-1  \bmod D(X)$, at least one of this two sets is not empty with a cardinal strictly lower than $k$. We can obtain a proper factor of $D(X)$ by computing;
\[
\gcd(D(X), A(X)-1) = \prod_{i\in T} (X - e_i)\  \mbox{or}\ \gcd(D(X), A(X)) = \prod_{i\in S} (X - e_i)\,.
\]

To find such a polynomial $A(X)$, we consider a random polynomial $B(X)\in  (\mathbb{Z}/p\mathbb{Z})[X]$ of degree lower than $k$. We note $b_i = B(X) \bmod (X-e_i)$ in  $\mathbb{Z}/p\mathbb{Z}$. Then, 
\[b_i^{p-1} = 0,1\  \mbox{and}\  b_i^{\frac{p-1}{2}} = 0,1,-1 \mbox{ in } \mathbb{Z}/p\mathbb{Z}\,. 
\]
If  $B(X)^{\frac{p-1}{2}}  \neq 0,1,-1  \bmod D(X)$, then we choose $A(X) =  B(X)^{\frac{p-1}{2}}   \bmod D(X)$. 
If $\gcd(D(X), A(X)-1) $ and  $\gcd(D(X), A(X))$ are trivial factors, then we draw randomly another polynomial $B(X)$, else we iterate this method with the found non trivial factors $\gcd(D(X), A(X)-1) $, $\gcd(D(X), A(X))$ and $D(X)$ divided by these factors, until all the factors are of degree $1$.

\begin{example} We consider $p = 7826474692469460039387400099999297$ and the reduction polynomial  $E(X)  =  X^5 + X^2 + 1$.
First, we compute
\[R(X) \begin{array}[t]{l}= X^{p-1} - 1 \bmod E(X) \mbox{ in }(\mathbb{Z}/p\mathbb{Z})[X]\\
      = 3659189086300930014207106583318421 \; X^4    \\
      + 7322126259420098177093985099094624\; X^3 \\
      +    1727826215301243349042222461135262\; X^2\\
      + 7098030983909056985211630090182831\; X\\
      + 7372958503626664659096728485020294
\end{array}
\]
Then we obtain
 \[D(X) \begin{array}[t]{l} 
= \gcd(R(X), E(X))  \mbox{ in }(\mathbb{Z}/p\mathbb{Z})[X]\\
 = X^2  
 + 1305849998419067291000337897705258 \; X \\
 + 1793073000954204546034194068098826
\end{array}  
\]
Next, we randomly draw $B(X)$  $\bmod\ D(X)$  in $(\mathbb{Z}/p\mathbb{Z})[X]$,\\
$$\begin{array}[t]{ll} 
B(X)= &7090634213741414696606254289781859  \; X \\
&+ 4896184070237294585014544822120651 
\end{array}  \\
$$
We compute 
$
A(X) \begin{array}[t]{l}
=B(X)^{\frac{p-1}{2}} \bmod D(X)  \mbox{ in }(\mathbb{Z}/p\mathbb{Z})[X]\\
=6630612051164461204925113188582895  \; X \\
+ 7099602401400966247478428555087365
\end{array}\\
$
We obtain a first factor $\mbox{ in }(\mathbb{Z}/p\mathbb{Z})[X]$,
$$T(X)= gcd(A(X)-1,D(X)) = X+ 2974625651330718059716669102633643.$$
By division we find the second factor,\\
$$D(X) / T(X) = X - 1668775652911650768716331204928385$$
%


\end{example}

\subsection{Example giving all the possible PMNS for a given $p$\label{NbSystems}}

This example was produced with SageMath subroutines for the $256$-bits prime $p$:\\
 {\footnotesize $p =57896044618658097711785492504343953926634992332820282019728792003956566811073$}, 
 and $n=9$. We consider the  PMNS  $\GB = (p,n,\gamma,\rho)_E$  such that:
 \begin{itemize}
 \item  $E(X) = X^9 + a_k X^k + \dots + a_1 X + a_0 \in \mathbb{Z}[X]$, where $k \leqslant 4$,
 \item   the coefficients $a_i$  satisfy  $| a_i | \leqslant 1$ for $1 \leqslant i\leqslant k$ and  $| a_0 | \leqslant 3$,
  \item $ \rho \leqslant 2^{31}$.  
\end{itemize}

The number of PMNS  $\GB = (p,n,\gamma,\rho)_E$  that can be built for different polynomials verifying the  criteria is equal to $354$.

 Most of the time, the best $\rho$ is obtained   266 times by LLL  but    BKZ or HKZ are  46 times better than LLL , then 42 are better than the previous  ones with  \cref{CorIrreduc}  or    \cref{CorIrreducb}  or  \cref{PropInversible}  with a short vector.

\section{Conclusion}
\label{sec:conclusion}
In this paper, we have shown with    \cref{theorem_PMNS_generalise} the link between the existence of a PMNS and the lattice generated by its reduction polynomial and its modulo. We thus set a bound on the size of the PMNS digits depending on the covering radius of this lattice. 
Then,  \cref{RhoBoundTheorem} provides  a bound which can   easily be computed from the infinity norm of a  basis of the lattice. This second theorem has led us to consider PMNS defined by an irreducible polynomial. In this case,  it is easy to define a basis of the lattice that can be associated with the PMNS (\cref{PropInversible},   \cref{CorIrreduc} and \cref{CorIrreducb}). These results allowed us to produce PMNS with specific reduction polynomials allowing efficient reductions and whose roots give the bases ($\gamma$) of these systems. Now, we have the opportunity to offer for a given modulo $p$ a wide variety of PMNS with small digits and reduced associated lattices.

Very recently, the use of PMNS to perform modular multiplications was reintroduced in \cite{HvdH2019},  where some interesting complexity theoretical bounds are given.

\subsection*{Acknowledgment}  We thank Mrs. Val\'erie Berth\'e  for her attentive proofreading and her judicious suggestions.
\subsection*{Funding}This work was partially supported by the  ANR project ARRAND 15-CE39-0002-01 and the INRIA international associated team MACAO. All examples were coded with SageMath \url{https://www.sagemath.org}

\bibliographystyle{amsplain}
\bibliography{biblio5}

\end{document}